\newcommand\basedir{}%
\crefname{appendix}{appendix}{appendices}
\Crefname{appendix}{Appendix}{Appendices}
\tikzset{->,  
  >=stealth', 
  node distance=3cm, 
  state/.style={
    rectangle,
    rounded corners,
    draw=black, thick,
    minimum height=1.7em,
    inner sep=2pt,
    text centered,
  },
}
\newcommand{\crefnames}[3]{%
  \@for\next:=#1\do{%
    \expandafter\crefname\expandafter{\next}{#2}{#3}%
  }%
}
\newcommand{\chlabel}[1]{%
  \def\@currentlabel{#1}%
  \phantomsection%
}
\newcolumntype{.}{@{}}
\newcolumntype{M}{@{\mskip\thickmuskip}}
\definecolor{StringRed}{rgb}{.637,0.082,0.082}
\definecolor{CommentGreen}{rgb}{0.0,0.55,0.3}
\definecolor{KeywordBlue}{rgb}{0.0,0.3,0.55}
\definecolor{LinkColor}{rgb}{0.55,0.0,0.3}
\definecolor{CiteColor}{rgb}{0.55,0.0,0.3}
\definecolor{HighlightColor}{rgb}{0.0,0.0,0.0}
\definecolor{grey}{rgb}{0.5,0.5,0.5}
\definecolor{red}{rgb}{1,0,0}
\theoremstyle{definition}
\newtheorem{proposition}{Proposition}
\newtheorem{definition}[proposition]{Definition}
\newtheorem{lemma}[proposition]{Lemma}
\newtheorem{theorem}[proposition]{Theorem}
\newcommand*{\Sref}[1]{\hyperref[#1]{\S\ref*{#1}}}
\newcommand*{\secref}[1]{\hyperref[#1]{Section~\ref*{#1}}}
\newcommand*{\lemref}[1]{\hyperref[#1]{Lemma~\ref*{#1}}}
\newcommand*{\thmref}[1]{\hyperref[#1]{Theorem~\ref*{#1}}}
\newcommand{\corref}[1]{\hyperref[#1]{Cor.~\ref*{#1}}}
\newcommand*{\defref}[1]{\hyperref[#1]{Definition~\ref*{#1}}}
\newcommand*{\egref}[1]{\hyperref[#1]{Example~\ref*{#1}}}
\newcommand*{\appendixref}[1]{\hyperref[#1]{Appendix~\ref*{#1}}}
\newcommand*{\figref}[1]{\hyperref[#1]{Figure~\ref*{#1}}}
\newcommand*{\tabref}[1]{\hyperref[#1]{Table~\ref*{#1}}}
\newcommand{\ie}{\emph{i.e.,} }
\newcommand{\cf}{\emph{cf.} }
\newcommand{\eg}{\emph{e.g.,} }
\newcommand{\etal}{\emph{et~al.}\xspace}
\newcommand{\wrt}{w.r.t.~}
\newcommand{\resp}{resp.~}
\newcommand{\etc}{\emph{etc.}}
\newcommand{\loccit}{\emph{loc. cit.} }
\newcommand{\HeapLang}{\textlog{HeapLang}\xspace}
\newcommand{\Type}{\textlog{Type}}
\newcommand{\langkwStyle}[1]{\textbf{\color{blue} #1}}
\reservestyle{\langkw}{\langkwStyle}
\reservestyle{\langconst}{\textbf}
\reservestyle{\langother}{\mathit}
\def\Match#1with#2{\<match> \spac #1 \spac \<with> \spac #2}
\def\Let#1=#2in{\<let> #1 \mathrel{=} #2 \<in> }
\def\If#1then{\<if> #1 \spac \<then>}
\def\Else{\spac \<else>}
\def\Ref(#1){\<ref>(#1)}
\def\Rec#1 #2={\<rec> {#1} \  {#2} \mathrel{=} }
\newcommand*\Fork[1]{\<fork>\spac\set{#1}}
\newcommand\deref{\mathop{!}}
\let\gets\leftarrow
\newcommand{\apar}[2]{#1 \mathrel{||} #2}
\newcommand{\Ectx}{\textdom{ECtx}}
\newcommand\bigcdot{\mathrel{\raisebox{1pt}{$\scriptscriptstyle\bullet$}}}
\newcommand\holed[1]{[\,#1\,]}
\newcommand\exphole{\holed\bigcdot}
\newcommand\contextE[1]{\kern1pt\holed{#1}}
\newcommand{\pstep}[1][]{\xrightarrow{#1}_{\mathsf{tp}}}
\newcommand{\loc}{\ell}
\newcommand{\flatType}{\ensuremath{\mathrm{flat}}}
\newcommand{\pureexec}[2]{#1 \ra_{\mathsf{pure}} #2}
\newcommand{\typed}[3]{#1 \vdash #2 : #3}
\newcommand{\typedS}[3]{#1 \models #2 : #3}
\newcommand{\thelogic}{{SeLoC}\xspace}
\newcommand{\staterel}{\mathit{SR}}
\newcommand{\outloc}{\mathit{out}}
\newcommand{\Llocset}{\mathfrak{L}}
\newcommand{\lowEquiv}{\mathrel{\sim_\Llocset}}
\newcommand{\Slevel}{\textlog{Lbl}}
\newcommand{\lvlvar}{\chi}
\newcommand{\lvlvarB}{\xi}
\newcommand{\low}{\mathbf{L}}
\newcommand{\high}{\mathbf{H}}
\newcommand{\LRvar}{\theta}
\newcommand{\mapstoLR}{\mapsto_{\LRvar}}
\newcommand{\mapstoL}{\mapsto_\mathsf{L}}
\newcommand{\mapstoR}{\mapsto_\mathsf{R}}
\newcommand{\tint}{\mathtt{int}}
\newcommand{\tunit}{\mathtt{unit}}
\newcommand{\tref}[1]{\mathtt{ref}\ #1}
\newcommand{\tbool}{\mathtt{bool}}
\newcommand{\dwp}{\textlog{dwp}}
\NewDocumentCommand\dwpre{O{} m m O{} m}%
  {\dwp^{#1}_{#4}\spac#2\spac\&\spac#3\spac{\left\{#5\right\}}}
\newcommand{\awpsymb}{\textlog{awp}_{\LRvar}}
\NewDocumentCommand\awp{O{} m m}%
  {\textlog{awp}_{#1}\spac#2\spac{\left\{#3\right\}}}
\newcommand{\interp}[1]{\llbracket #1 \rrbracket}
\newcommand{\interpE}[1]{\llbracket #1 \rrbracket^\mathsf{e}}
\NewDocumentCommand \future {O{\top} O{\top}} {\later\kern-2.4ex\pvs[#1][#2]\kern-2.4ex\later\,}
\newcommand{\lkvar}{\mathit{lk}}
\newcommand{\acquire}{\mathit{acquire}}
\newcommand{\newlock}{\mathit{new\_lock}}
\newcommand{\release}{\mathit{release}}
\newcommand{\isLock}[3]{\ensuremath{\textlog{isLock}(#1, #2, #3)}}
\newcommand{\locked}[2]{\ensuremath{\textlog{locked}(#1, #2)}}
\newcommand{\arrayMake}[2]{\mathit{new\_array}\ #1\ #2}
\newcommand{\arrayGet}[2]{\mathit{array\_get}\ #1\ #2}
\newcommand{\STSState}{\textlog{State}}
\newcommand{\Classified}{\textlog{Classified}}
\newcommand{\Intermediate}{\textlog{Intermediate}}
\newcommand{\Declassified}{\textlog{Declassified}}
\newcommand{\inState}[1]{\textlog{in\_state}_{\gname}\ifthenelse{\equal{#1}{}}{}{(#1)}}
\newcommand{\stateToken}[1]{\textlog{state\_token}_{\gname}\ifthenelse{\equal{#1}{}}{}{(#1)}}
\newcommand{\newRec}{\mathit{new\_vdep}}
\newcommand{\classify}{\mathit{classify}}
\newcommand{\declassify}{\mathit{declassify}}
\newcommand{\readRec}{\mathit{read}}
\newcommand{\storeRec}{\mathit{store}}
\newcommand{\isClassified}{\mathit{get\_classified}}
\newcommand{\valueDependent}{\textlog{val\_dep}}
\newcommand{\classification}[2]{\textlog{class}_{(r_1, r_2)}(#1, #2)}
\newcommand{\classificationAuth}[1]{\textlog{class\_auth}_{(r_1, r_2)}(#1)}
\newcommand{\RR}{\mathrel{\mathcal{R}}}
\newcommand{\RRR}{\RR^{\ast}}
\renewcommand{\Prop}{\textlog{Prop}}
\title{Compositional Non-Interference \\ for Fine-Grained Concurrent Programs}
\author{
  \IEEEauthorblockN{Dan Frumin}
  \IEEEauthorblockA{Radboud University} \and
  \IEEEauthorblockN{Robbert Krebbers}
  \IEEEauthorblockA{Delft University of Technology}\and
  \IEEEauthorblockN{Lars Birkedal}
  \IEEEauthorblockA{Aarhus University}}
\begin{document}
\maketitle
\thispagestyle{plain}
\pagestyle{plain}

\begin{abstract}
Non-interference is a program property that ensures the absence of information leaks.
In the context of programming languages, there exist two common approaches for establishing non-interference: type systems and program logics.
Type systems provide strong automation (by means of type checking), but they are inherently restrictive in the kind of programs they support.
Program logics support challenging programs, but they typically require significant human assistance, and cannot handle modules or higher-order programs.

To connect these two approaches, we present \thelogic---a separation logic for non-interference, on top of which we build a type system using the technique of logical relations.
By building a type system on top of separation logic, we can compositionally verify programs that consist of typed and untyped parts.
The former parts are verified through type checking, while the latter parts are verified through manual proof.

The core technical contribution of \thelogic is a relational form of weakest preconditions that can track information flow using separation logic resources.
\thelogic is fully machine-checked, and built on top of the Iris framework for concurrent separation logic in Coq.
The integration with Iris provides seamless support for fine-grained concurrency, which was beyond the reach of prior type systems and program logics for non-interference.

\end{abstract}

\begin{IEEEkeywords}
non-interference, logical relations, separation logic, fine-grained concurrency, Coq, Iris
\end{IEEEkeywords}

\section{Introduction}
\label{sec:intro}

\emph{Non-interference} is a form of \emph{information flow control}~(IFC) used to express
that confidential information cannot leak to attackers.
To establish non-interference of modern programs, it is crucial to develop verification techniques that support challenging programming paradigms and programming constructs such as concurrency.
Furthermore, to scale up these techniques to larger programs, it is important that they are compositional.
That is, they should make it possible to establish non-interference of program modules in isolation, without having to consider all possible interference from the environment and other program modules.

Much effort has been put into developing these techniques.
In terms of expressivity, techniques have been developed that support dynamically allocated references and higher-order functions~~\cite{flowcaml,RajaniGargModel,ZdancewicThesis}, and concurrency~\cite{assumptionsguarantees,prob-ni,CovernMain,CovernValueDep,CovernVeronica,comp-ni-sep,SecCSL}.
Despite recent advancements, the expressivity of available techniques for non-interference still lags behind the expressivity of techniques for functional correctness, which have seen major breakthroughs since the seminal development of concurrent separation logic~\cite{OHEARN2007,brookes:csl}.
There are several reasons for this.

First, a lot of prior work on non-interference focused on type systems and type system-like logics, \eg \cite{flowcaml,assumptionsguarantees,CovernMain,SecCSL,comp-ni-sep}.
  Such systems provide strong automation (by means of type checking), but lack capabilities to reason about functional correctness, and are thus inherently restrictive in the kind of programs they can verify.
For example, it may be the case that the confidentiality of the contents of a reference depends on runtime information instead of solely static information (this is called \emph{value-dependent classification} \cite{DynSecLabels,CovernValueDep,RHTT,DepIFTypes,GregersenIdris}).

Second, proving non-interference is harder than proving functional correctness.
  While functional correctness is a property about each single run of a program, non-interference is stated in terms of multiple runs of the same program.
  One has to show that for different values of confidential inputs, the attacker cannot observe a different behavior.

To overcome the aforementioned shortcomings, we take a new approach that combines program logics and type systems: we present a concurrent separation logic for non-interference on top of which we build a type system for non-interference.
Program modules whose non-interference relies on functional correctness (and thus cannot be type checked) can be assigned a type through manual proof by dropping down to our logic.
This combination of separation logic and type checking makes it possible compositionally to establish non-interference of programs that consist of untyped and typed parts.

Although ideas from concurrent separation logic have been employed in the context of non-interference before~\cite{comp-ni-sep,SecCSL}, we believe that in the context of non-interference the combination of typing and separation logic is new.
Moreover, our approach provides a number of other advantages compared to prior work:
\begin{itemize}
\item Our separation logic supports \emph{fine-grained concurrency}.
  That is, it can verify programs that use low-level atomic operations like compare-and-set to implement lock-free concurrent data structures and high-level synchronization mechanisms such as locks/mutexes.
  In prior work, such mechanisms were taken to be language primitives.
\item Our separation logic is \emph{higher-order}, making it possible to assign very general specifications to program modules.
\item Our separation logic is \emph{relational}, making it possible to reason about multiple runs of a program with different values for confidential inputs.
\item Our separation logic provides a powerful \emph{invariant} mechanism to describe protocols, making it possible to reason about sophisticated forms of sharing, as in value-dependent classifications.
\end{itemize}

In order to build our logic we make use of the Iris framework for concurrent separation logic~\cite{iris1,iris2,iris3,irisJFP}, which provides basic building blocks, including the invariant mechanism.
To combine typing and separation logic, we follow recent work on \emph{logical relations} in Iris \cite{irisIPM,irisEffects,irisRunSt,reloc,RustBelt}, but apply it to non-interference instead of functional correctness or contextual refinement.

\smallskip
\noindent \textbf{Contributions.}
We introduce \textbf{\thelogic}, the first separation logic for non-interference that combines typing and manual proof.

\begin{itemize}
\item We present a number of challenging examples that can be verified using \thelogic (\Cref{sec:examples}).
\item \thelogic supports a language with fine-grained concurrency, higher-order functions, and dynamic (higher-order) references (\Cref{sec:lang}).
   \thelogic is sound \wrt a standard timing-sensitive notion of non-interference---\emph{strong low-bisimulations}---by Sabelfeld and Sands \cite{prob-ni} (\Cref{sec:security}).
\item To verify challenging programs, \thelogic features a relational version of weakest preconditions, which integrates seamlessly with the powerful mechanism for invariants and protocols of the Iris framework (\Cref{sec:logic}).
\item Using the technique of logical relations, we build a type system on top of \thelogic.
  By building a type system on top of separation logic, we can establish non-interference of programs that consist of typed and untyped parts (\Cref{sec:logrel}).
\item To compose proofs of program modules that cannot be type checked (because their interface relies on functional correctness), we show how to express modular separation logic specifications for non-interfence in \thelogic (\Cref{sec:modular_spec}).
\item We prove soundness of \thelogic by constructing a bisimulation out of a separation logic proof (\Cref{sec:soundness}).
\item We have mechanized \thelogic, its type system, its soundness proof, and all examples in the paper and appendix, in Coq (\Cref{sec:formalization}).
  The mechanization can be found online at~\cite{seloc_coq}.
\end{itemize}


\section{Motivating Examples}
\label{sec:examples}
Before we proceed with the formal development of the paper in \Cref{sec:preliminaries}, we present a number of challenging programs 
 to demonstrate the expressivity of \thelogic. 

\subsection{Modularity and data structures}
\label{sec:set_example}

To guarantee non-interference, one should prove that \emph{high-sensitivity} (\ie confidential) information cannot leak via \emph{low-sensitivity} (\ie publicly observable) outputs.
Apart from such explicit leaks, one has to prove the absence of implicit leaks that arise from the timing behavior of the program.
To avoid timing leaks, Agat and Sands \cite{AgatSands01} outlined the ``worst-case principle'': a non-interfering algorithm operating on high-sensitivity data should have the same best-case and worst-case execution time.
We apply this design principle to a set data structure that stores high-sensitivity elements.
The implementation can be type checked using our approach, automatically providing a proof of timing-sensitive non-interference.

To encapsulate the internal set representation, we first present the interface of our data structure.
This interface is given using \emph{closures} (\ie higher-order functions):\footnote{When using modules or classes, the same kind of considerations apply.}
\begin{align*}
\<val>& \ \mathit{new\_set} : \tunit \to 
\left\{
\begin{array}{@{} l @{\ } l}
\mathit{lookup} &: \tint^\high \to \tbool^\high;\\
\mathit{insert} &: \tint^\high \to \tunit
\end{array}
\right\}
\end{align*}
The function $\mathit{new\_set}$ allocates an empty set, and returns a record with functions that operate on the set.
The function $\mathit{lookup}$ takes a high-sensitivity integer---typed as $\tint^\high$, where $\high$ refers to the \emph{high}-sensitivity of the data--- and returns a high-sensitivity Boolean---typed as $\tbool^\high$---that signifies whether the argument is in the set or not.
The function $\mathit{insert}$ takes a high-sensitivity integer, and adds it to the set.

\begin{figure}[t]
\begin{align*}
\<let>& \ \mathit{new\_set}\ \unittt = \\
&\Let k = \Ref(1) in \\
&\Let arr = \Ref(\arrayMake{1}{\<None>}) in\\
&
\left\{
\begin{array}{@{} l @{}}
\mathit{lookup}\ x =
\begin{array}[t]{@{} l @{}}
  \\\hspace{-4em}
  \mathit{lookup\_loop}\ (\deref\mathit{arr})\ 
     (\deref k)\ 0\ (\mathit{cap}\ (\deref k))\ x\ \<false> 
\end{array}\\
\mathit{insert}\ x =
\begin{array}[t]{@{} l @{}}
  \mathit{insert\_loop}\ \mathit{arr}\ \mathit{k}\ 0\ x 
\end{array}
\end{array}
\right\} \\
\<let> & \<rec> \ \mathit{lookup\_loop}\ a\ k\ l\ r\ x\ \mathit{is\_found} = \\
& \If k = 0 then \mathit{is\_found} \Else \\
& \Let i = (l+r)/2 in \\
& \Let e = \arrayGet{a}{i} in \\
& \Let \mathit{lr_1} = (i+1, r) in
 \Let \mathit{lr_2} = (l, i-1) in \\
& \Let \mathit{(l, r)} = \If (e < x) then \mathit{lr_1} \Else \mathit{lr_2} in \\
& \mathit{lookup\_loop}\ a\ (k-1)\ l\ r\ x\ (\mathit{is\_found} \mathrel{\vee} (e = x)) \\
\<let> & \<rec>\ \mathit{insert\_loop}\ arr\ k\ i\ x = \dots
\end{align*}
\caption{Implementation of a set using the ``worst-case principle''.}
\label{fig:set1}
\end{figure}

\Cref{fig:set1} shows an implementation of our set interface using a sorted dynamic array.\footnote{The full implementation can be found in the Coq mechanization.
The full implementation moreover makes use of locks to obtain thread-safety.}
To implement the function $\mathit{lookup}$, we make use of binary search---but with a twist to avoid timing leaks.
An ordinary version of binary search would terminate once it has found the element, making it possible to observe if the element is in the set via timing.
Our implementation ensures that $\mathit{lookup}$ takes the same time regardless of whether the element is in the set. 
To achieve that, we represent the set using an array whose size $n$ satisfies $\mathit{cap}(k) = n$, for some $k$:
\begin{align*}
\mathit{cap}(0) ={}& 0 &
\mathit{cap}(k+1) ={}& 1 + 2 \cdot \mathit{cap}(k)
\end{align*}
This guarantees that the array can be recursively partitioned into two sub-arrays of the same size and a pivot element in the middle.
If the number of actual elements in the set is less than $\mathit{cap}(k)$, the array is padded with a dummy element.\footnote{For comparisons $<$ and equality $=$ checks we assume that the dummy element $\<None>$ is the greatest element and that it is not equal to any actual element in the array, which are of the form $\<Some>(x)$.}

If at some iteration of $\mathit{lookup\_loop}$ we find that the element $x$ is present in the array, we make note of that fact but still continue with the recursion until the array is no longer splittable.
Thus, the function $\mathit{lookup}$ is always executed with $k$ levels of recursion for an array of size $\mathit{cap}(k)$.
In the implementation of $\mathit{lookup\_loop}$ we pass the parameter $k$ and decrease it on every recursive call.

The function $\mathit{insert}$ traverses the whole array and is thus always executed with $\mathit{cap}(k)$ levels of recursion.
If the array is full, then it is dynamically resized to the size $\mathit{cap}(k+1)$.
In summary, both $\mathit{lookup}$ and $\mathit{insert}$ operations employ a low-sensitivity termination condition.

We use our type system (described in \Cref{sec:logrel}) to type check the implementation against the interface.
Of special note here is the type checking of the $\<if>$ branching.
In the implementation of $\mathit{lookup\_loop}$ and $\mathit{insert\_loop}$ we branch on high-sensitivity data.
Notably, in $\mathit{lookup\_loop}$ we compare the argument $x$ with the pivot $e$ (both are high-sensitivity integers), and descend into one of the partitions of the array depending on this comparison.
Branching on high-sensitivity data is not secure in general, but in this case the branching is secure.
This is 
because both branches simply return variables ($\mathit{lr_1}$ and $\mathit{lr_2}$), \ie they do not perform any
computations, and thus do not leak information about the high-sensitivity condition via timing.\footnote{In a low-level language like C the branching can be written using arithmetic.}

\subsection{Typing via manual proof}
\label{sec:examples:manual}

The example in the previous section made use of various operations on arrays: $\mathit{array\_make}$, $\mathit{array\_get}$, and $\mathit{array\_set}$.
When reasoning about the set data structure, we assumed that these array operations are safe and secure, \ie when one tries to access an out-of-bounds index, $\mathit{array\_get}$ returns a dummy element, instead of reading arbitrary memory.

The programming language that we consider does not have safe arrays as a primitive construct.
Instead, safe arrays are implemented as a library: an array is stored together with its length, and the unsafe operations are wrapped in dynamic checks.
Naturally, such operations cannot be type checked in an ML-style type system, because their safety and security depends on functional correctness.
However, one of the core features of our approach is that such functions can be assigned types through a manual separation logic proof in \thelogic.
Such a manual proof takes functional properties (e.g., that the index is within the array bounds) into account.
Once we manually verify that the array library satisfies the desired typing, we can compose it with the type checked example from the previous section to obtain a library that guarantees safety and non-interference for its clients.\footnote{The proof of the array library and its integration in the type checking of the set data structure can be found in the Coq mechanization.}

The combination of typing and manual proof is important for compositionality and scalability: challenging library code whose security relies on functional correctness (such as the library for safe arrays) can be manually verified using separation logic, and then used to automatically type check other libraries (such as the set data structure).

\subsection{Fine-grained concurrency}
\label{sec:example:classify}
\begin{figure}[t!]
\begin{align*}
\<let> \<rec> \mathit{thread1}\;\mathit{out}\;\mathit{r} ={}&
  \begin{array}[t]{@{} l @{} l @{}}
  ( & \<if> \neg \deref\mathit{r}.\mathit{is\_classified} \\[-0.2em]
    & \<then> \mathit{out} \gets \deref\mathit{r}.\mathit{data}\ \<else> \unittt);
  \end{array}\\[-0.2em]
  & \mathit{thread1}\; \mathit{out}\; \mathit{r} \\
\<let> \mathit{thread2}\;\mathit{r} ={}&
  \mathit{r}.\mathit{data} \gets 0; \\[-0.2em]
  & \mathit{r}.\mathit{is\_classified} \gets \<false>\\
\<let> \mathit{prog}\; \mathit{out}\; \mathit{secret} ={}&
  \begin{array}[t]{@{} l @{}}
  \Let \mathit{r} = \left\{
  \begin{array}{@{} l @{}}
  \mathit{data} = \Ref(\mathit{secret}); \\[-0.2em]
  \mathit{is\_classified} = \Ref(\<true>)
  \end{array} \right\} \end{array} \\
  & in \!\apar{\mathit{thread1}\; \mathit{out}\; \mathit{r}}{\mathit{thread2}\; \mathit{r}}
\end{align*}
\caption{Lock-free value-dependent classification.}
\label{fig:value_dep}
\end{figure}
As shown in \Cref{sec:examples:manual}, the ability to fall back to a manual proof is useful to assign types to code that uses operations such as array indexing whose safety and security relies on functional correctness.
This ability becomes even more pertinent for (fine-grained) concurrent programs, where the safety and security can depend on specific protocols for accessing data that is shared between threads.

To demonstrate the application to concurrency, we consider the program $\mathit{prog}$ in \Cref{fig:value_dep}, which is a lock-free version of a similar lock-based program in \cite{SecCSL}.
The program runs two threads in parallel, both of which operate on a reference $\mathit{r}.\mathit{data}$.
The data in this reference has a \emph{value-dependent classification}: the value of the flag $\mathit{r}.\mathit{is\_classified}$ determines the sensitivity of $\mathit{r}.\mathit{data}$.
If the flag $\mathit{r}.\mathit{is\_classified}$ is set to $\<false>$, then the data stored in $\mathit{r}.\mathit{data}$ is classified with low-sensitivity, and if it is set to $\<true>$, the the data is classified with high-sensitivity.
The record $r$ initially contains high-sensitivity data from the integer variable $\mathit{secret}$.
The first thread $\mathit{thread1}$ checks if the record $r$ is classified (\ie the flag $\mathit{r}.\mathit{is\_classified}$ is $\<true>$), and if it is not, it leaks the data $\mathit{r}.\mathit{data}$ to an attacker-observable channel $\mathit{out}$.
The second thread $\mathit{thread2}$ overwrites the data stored in $r$ and resets the classification flag.

Due to the precise interplay of the two threads, the program $\mathit{prog}$ is secure, in the sense that it does not leak the data $\mathit{secret}$ onto the public channel $\mathit{out}$.
Since our example does not use locks, there are more possible interleavings than in the original example in \cite{SecCSL}, and consequently there are more things that could potentially go wrong in $\mathit{thread1}$:
\begin{enumerate}
\item the data $\mathit{r}.\mathit{data}$ can still be classified even if the bit $\mathit{r}.\mathit{is\_classified}$ is set to $\<false>$;
\item the classification of the data stored in $\mathit{r}$ might change between reading the field $\mathit{is\_classified}$ and reading the actual data from the field $\mathit{data}$.
\end{enumerate}
Notice that if we replace the second thread by the expression below, where the two operations in $\mathit{thread2}$ have been swapped, then we would violate the first condition:
\begin{align*}
\<let> \mathit{thread2}_{\mathit{bad}}\;\mathit{r} ={}&
  \mathit{r}.\mathit{is\_classified} \gets \<false>;\ 
  \mathit{r}.\mathit{data} \gets 0
\end{align*}

To verify that both of these situations cannot occur, we have to establish a \emph{protocol} on accessing the record $\mathit{r}$.
The protocol should ensure that at the moment of reading $\mathit{r}.\mathit{is\_classified}$ the data $\mathit{r}.\mathit{data}$ has the correct classification (ruling out situation~1).
The protocol should also ensure a form of \emph{monotonicity}: whenever the classification becomes low (\ie $\mathit{r}.\mathit{is\_classified}$ becomes $\<false>$), $\mathit{r}.\mathit{data}$ is not going to contain high-sensitivity data for the rest of the program (ruling out situation~2).

The security of $\mathit{thread1}$, and the whole program, depends on the specific protocol attached to the record $\mathit{r}$ and that the protocol is followed by all the components that operate on it.
In particular, for this example the security depends on the fact that classification only changes in a \emph{monotone} way.
We outline the proof of safety and security of this example in \Cref{sec:logic:value_dep}.

\subsection{Higher-order functions and dynamic references}
\label{sec:awkward_example}
As shown in this section, higher-order functions are useful for modularity---they can be used to model interfaces.
However, since they can operate on encapsulated state, they are difficult to reason about.
Fortunately, \thelogic's protocol mechanism is also applicable to proving non-interference of functions with encapsulated state.
Consider the program $\mathit{awk}$, a variation of the ``awkward example'' of Pitts and Stark \cite{pitts/stark:operfl}:
\[
\<let> \mathit{awk}\; \val  = \Let x = \Ref(\val) in \!\!
  \Lam f. x \gets 1; f (); \deref x
\]
When applied to a value $\val$, the program $\mathit{awk}$ returns a closure that, when invoked, always returns low-sensitivity data from the reference $x$, even if the original value $v$ has high-sensitivity.
Intuitively, $\mathit{awk}\;\val$ returns a closure that does not leak any data, even if the original value $\val$ passed to $\mathit{awk}$ had high-sensitivity.
The lack of leaks crucially relies on the following facts:
\begin{itemize}
\item the reference $x$ is allocated in, and remains local to, the closure, it cannot be accessed without invoking the closure;
\item the reference $x$ can be updated only in a monotone way: once the original value $v$ gets overwritten with $1$, the reference $x$ never holds a high-sensitivity value again.
\end{itemize}
To see why second condition is important, consider $\mathit{awk}_{\mathit{bad}}$, which violates the monotonicity, and is thus not secure:
\begin{align*}
\<let> \mathit{awk}_{\mathit{bad}}\; \val ={}& \Let x = \Ref(v) in \!\!
   \Lam f. x \gets v; x \gets 1; f (); \deref x
\end{align*}
Let $h = \mathit{awk}_{\mathit{bad}}\;v$ for a high-sensitivity value $\val$.
Now, when running $h\;(\Lam x. \Fork{h(\mathit{id})})$, an attacker could influence the scheduler so that the first dereference $\deref x$ happens just after the assignment $x \gets v$ in the forked-off thread, causing $\val$ to leak.

Pitts and Stark studied the ``awkward example'' to motivate the difficulties of reasoning about higher-order functions and state.
They were interested in contextual equivalence, but as we can see, similar considerations apply to non-interference.


\section{Preliminaries}
\label{sec:preliminaries}

In this section we describe the programming language that we consider in this paper (\Cref{sec:lang}), and the non-interference property that \thelogic establishes (\Cref{sec:security}).

\subsection{Object language and scheduler semantics}
\label{sec:lang}
\thelogic is defined over an ML-like programming language, called $\HeapLang$,
 with higher-order mutable references, recursion, the $\<fork>$ operation, and atomic compare-and-swap $\<CAS>$.
$\HeapLang$ is the default programming language that is shipped with Iris~\cite{irisWWW}.
Its values and expressions are:
\begin{align*}
\val \in \Val \bnfdef{}&
	\Rec f \var = \expr \ALT
        (\val_1, \val_2) \ALT
        \<true> \ALT \<false> \ALT
        \dots\\
\expr, \exprB, t \in \Expr \bnfdef{}&
	\var \ALT
	\Rec f \var = \expr \ALT
	\expr_1 (\expr_2) \ALT
	\Fork{\expr} \\ & \hspace{-3em} \ALT
	\Ref(\expr) \ALT
	\deref \expr \ALT
	\expr_1 \gets \expr_2 \ALT
        \<CAS>(\expr_1,\expr_2,\expr_3) \ALT \dots
\end{align*}

We omit the usual operations on pairs, sums, and integers.
We use the following syntactic sugar:
$(\Lam \var. \expr) \eqdef (\Rec \_ \var = \expr)$,
$(\Let \var = \expr_1 in \expr_2) \eqdef ((\Lam \var. \expr_2)\ \expr_1)$, and
$(\expr_1 ; \expr_2) \eqdef (\Let \_ = \expr_1 in \expr_2)$.
$\HeapLang$ has no primitive syntax for records, so they are modeled using pairs.
Arrays are omitted in the paper, but they are present in the Coq mechanization.

$\HeapLang$ features dynamic thread creation, so we can implement the parallel composition operation using $\<fork>$:
\begin{align*}
\<let>\Rec \mathit{join} x ={} & \Match \deref{x} with \<Some>(v) \to v\\
  & \mid \<None> \to \mathit{join}\ x\\
\<let>\mathit{par}(f_1, f_2) ={} & \Let x = \Ref(\<None>) in \\
& \Fork{x \gets \<Some> (f_1 \unittt)} \\
& \Let v_2 = f_2 \unittt in (\mathit{join}\ x, v_2)\\
\apar{\expr_1}{\expr_2} \eqdef{}& \mathit{par}(\Lam \_. \expr_1, \Lam \_. \expr_2)
\end{align*}

The operational semantics of $\HeapLang$ is split into three parts:
thread-local head reduction \(\hstep\), thread-local reduction \(\step\), and thread-pool reduction \(\pstep\).
The thread-local head reduction is of the form $(\expr_1, \stateS_1) \hstep (\expr_2, \stateS_2)$, 
where $\expr_i$ is an expression, and $\stateS_i$ is a heap, \ie a finite map from locations to values ($\State \eqdef \Loc \fpfn \Val$).
Since the security condition that we consider (\Cref{sec:security}) is tailored towards a deterministic thread-local semantics, we parameterize the operational semantics by an \emph{allocation oracle} $A : \State \to \Loc$: a function from heaps to locations satisfying $A(\stateS) \not\in \stateS$.
With the allocation oracle, the allocation head reduction is as follows:
\[
\infer{}
{(\Ref(\val), \stateS) \hstep (A(\stateS), \mapinsert{A(\stateS)}{\val}{\stateS})}
\]
The other rules for the head reduction relation are standard and can be found in the Coq mechanization.

The thread-local head reduction is lifted to the thread-local reduction using \emph{call-by-value evaluation contexts}:
\[
\lctx \in \Ectx \bnfdef{}
  \exphole \ALT
  \lctx(\val_2) \ALT
  \expr_1(\lctx) \ALT
  \If \lctx then \expr_1 \Else \expr_2 \ALT \dots
\]
The thread-local reduction is of the form $(\expr_1, \stateS_1) \step (\vec{\expr}_2, \stateS_2)$.
The second component contains a list $\vec{\expr}_2$ of expressions to accommodate forked-off threads as in \ruleref{step-fork}:
\begin{mathpar}
\inferH{step-lift}
{(\expr_1, \stateS_1) \hstep (\expr_2, \stateS_2)}
{(\fill\lctx[\expr_1], \stateS_1) \step (\fill\lctx[\expr_2], \stateS_2)}
\quad
\inferH{step-fork}
{\vec\expr = \fill\lctx[\unittt]\ \expr}
{(\fill\lctx[\Fork{\expr}], \stateS) \step (\vec\expr, \stateS)}
\end{mathpar}

The thread-pool reduction \(\pstep\) is defined by lifting the thread-local reduction to \emph{configurations} $(\vec{\expr}, \stateS)$.
Here, $\vec{\expr}$ contains all threads, including values for the threads that have terminated.
In the definition of \(\pstep\) we non-deterministically select an expression to take a thread-local step:
\[
\infer
{(\expr_i, \stateS_1) \step (\expr'_i \vec\expr, \stateS_2)}
{(\expr_0 \dots \expr_i \dots \expr_n, \stateS_1)
\pstep
(\expr_0 \dots \expr'_i \dots \expr_n\vec\expr, \stateS_2)
}
\]

\subsection{Strong low-bisimulations}
\label{sec:security}
To state the soundness theorem of \thelogic in \Cref{sec:soundness_statement}, we adapt a timing-sensitive notion of non-interference for concurrent programs known as \emph{strong low-simulations} on configurations by Sabelfeld and Sands~\cite{prob-ni}.
To define this notion, we first fix a set $\Llocset \subseteq \Loc$ of \emph{output locations}, which we assume to be low-sensitivity observable locations.
For simplicity, we require these locations to contain integers.

\begin{definition}
Heaps $\stateS_1$ and $\stateS_2$ are \emph{low-equivalent for output locations $\Llocset \subseteq \Loc$}, notation $\stateS_1 \lowEquiv \stateS_2$, if they agree on all the $\Llocset$-locations, \ie
$
  \All \loc \in \Llocset. \stateS_1(\loc) = \stateS_2(\loc) \neq \bot \wedge \stateS_1(\loc) \in \mathbb{Z}
$.
\end{definition}

\begin{definition}
\label{def:strong-low-bisim}
A \emph{a strong low-bisimulation} is a partial equivalence (\ie symmetric and transitive) relation $\RR$ on configurations such that:
\begin{enumerate}
\item If $(\val\vec{\expr}, \stateS_1) \RR (\valB\vec{\exprB}, \stateS_2)$, then $\val = \valB$;
\item If $(\vec{\expr}, \stateS_1) \RR (\vec{\exprB}, \stateS_2)$, then $\lvert \vec{\expr} \rvert = \lvert \vec{\exprB}\rvert$ and $\stateS_1 \lowEquiv \stateS_2$;
\item If $(\expr_0 \dots \expr_i \dots \expr_n, \stateS_1) \RR (\exprB_0 \dots \exprB_i \dots \exprB_n, \stateS_2)$ and $(\expr_i, \stateS_1) \step (\expr'_i \vec{\expr}, \stateS'_1)$, then there exist an $\exprB'_i$, $\vec{\exprB}$ and $\stateS'_2$ such that:
  \begin{itemize}
  \item $(\exprB_i, \stateS_2) \step (\exprB'_i \vec{\exprB}, \stateS'_2)$;
  \item $(\expr_0 \dots \expr'_i \dots \expr_n \vec{\expr}, \stateS'_1) \RR
    (\exprB_0 \dots \exprB'_i \dots \exprB_n \vec{\exprB}, \stateS'_2)$.
  \end{itemize}
\end{enumerate}
\end{definition}

Notice that the first expression in the thread-pool is the main thread.
The first condition in \Cref{def:strong-low-bisim} thus states that the return values of the main-thread should agree.

To model the input/high-sensitivity data we use free variables.
For simplicity we assume that the input data consists of integers.
We then arrive at the following top-level definition of security.

\begin{definition}[Security]
\label{def:security}
An expression $\expr$ with free variables $\vec{x}$ is \emph{secure} if for any heap $\stateS$ with $\stateS \lowEquiv \stateS$, and any sequences of integers $\vec{i}$, $\vec{j}$ with $\lvert \vec{i} \rvert = \lvert \vec{j} \rvert = \lvert \vec{x} \rvert$, there exists a strong low-bisimulation $\RR$ such that
$
  (\subst{\expr}{\vec{x}}{\vec{i}}, \stateS) \RR (\subst{\expr}{\vec{x}}{\vec{j}}, \stateS).
$
\end{definition}

\subsection{Non-determinism and non-interference}
\label{sec:rand_example}
The semantics presented in \Cref{sec:lang} is deterministic on the thread-local level, but we can still account for non-determinism arising from a scheduler.
Consider the program $\<rand>$, which uses intrinsic non-determinism of the thread-pool semantics to return either $\<true>$ or $\<false>$:
\begin{equation*}
\begin{aligned}
\<let> \<rand>\; () ={} & \Let x = \Ref(\<true>) in \!\Fork{x \gets \<false>}; \deref x
\end{aligned}
\end{equation*}
This program is secure \wrt \Cref{def:security} (we will prove this in \Cref{sec:logic} using \thelogic).

It is worth pointing out that if we modify the program and insert an additional assignment of a high-sensitivity value $h$ to $x$, then the resulting program is \emph{not} secure:
\begin{equation*}
\begin{aligned}
\<let> \<rand>_{\mathit{bad}}\; () ={} & \Let x = \Ref(\<true>) in \\
& \Fork{x \gets h}; \Fork{x \gets \<false>}; \deref x
\end{aligned}
\end{equation*}
The program is not secure because an attacker can pick a scheduler that always executes the leaking assignment, or, even simpler, can run the program many times under the uniform scheduler.
Because the program is not secure, we cannot prove it in \thelogic.
In \thelogic, we would verify each thread separately, and we would not be able to verify the forked-off thread $x \gets h$ (precisely because it makes the non-determinism of assignments to the reference $x$ dangerous).


\section{Overview of \thelogic}
\label{sec:logic}
We provide an overview of \thelogic by presenting its proof rules for relational reasoning (\Cref{sec:dwp}), its invariant mechanism (\Cref{sec:logic:example}), its soundness theorem (\Cref{sec:soundness_statement}), and finally its protocol mechanism (\Cref{sec:logic:value_dep}), which we apply to the verification of the program $\mathit{prog}$ from \Cref{sec:example:classify}.
The grammar of \thelogic is:
\begin{align*}
\prop, \propB \in{}& \Prop \bnfdef{}
	\TRUE \ALT \FALSE \ALT
	\All \var.\prop \ALT
	\Exists \var.\prop \ALT
	\prop * \propB \\ & \ALT
	\prop \wand \propB \ALT
	\loc \mapstoLR \val \ALT
        \awp[\LRvar]{\expr}{\pred}  \tag{$\LRvar \in \{\mathsf{L}, \mathsf{R}\}$} \\
                   & \ALT
        \dwpre{\expr_1}{\expr_2}[\mask]{\pred}
                  \\ & \ALT
	\knowInv\namesp\prop \ALT
	\later \prop \ALT
	\always \prop \ALT
	\pvs[\mask_1][\mask_2] \prop \ALT \dots
\end{align*}
\thelogic features the standard separation logic connectives like separating conjunction ($\ast$) and magic wand ($\wand$).
Since \thelogic is based on Iris~\cite{iris1,iris2,iris3,irisJFP}, it incorporates all the Iris connectives and modalities, in particular the \emph{later modality} ($\later$) for dealing with recursion, the \emph{persistence modality} ($\always$) for dealing with shareable resources, and the \emph{invariant connective} ($\knowInv\namesp\prop$) and the \emph{update modality} ($\pvs[\mask_1][\mask_2]$) for establishing and relying on protocols.
We will not introduce the Iris connectives in detail, but rather explain them on a by-need basis.
An interested reader is referred to \cite{irisJFP,irisLN} for further details.
Various connectives are annotated with \emph{name spaces} $\namesp \in \InvName$ and \emph{invariant masks} $\mask \subseteq \InvName$ to handle some bookkeeping.
When the mask is omitted, it is assumed to be $\top$, the largest mask.
We let $\pvs[\mask]$ denote $\pvs[\mask][\mask]$.
Readers who are unfamiliar with Iris can safely ignore the name spaces and invariant masks.

A selection of proof rules of \thelogic is given in \Cref{fig:dwp_rules}.
Each inference rule $\infer{\prop_1 \hspace{2pt} \dotsc \hspace{2pt} \prop_n}{\propB}$ in this paper should be read as an entailment $\prop_1 \ast \dotsc \ast \prop_n \vdash \propB$.
In the subsequent sections we explain and motivate the rules of \thelogic.

\subsection{Relational reasoning}
\label{sec:dwp}
The quintessential connective of \thelogic is the \emph{double weakest precondition} $\dwpre{\expr_1}{\expr_2}[\mask]{\pred}$.
It intuitively expresses that any two runs of $\expr_1$ and $\expr_2$ are related in a lock-step bisimulation-like way, and that the resulting values of any two terminating runs are related by the \emph{postcondition} $\pred : \Val \to \Val \to \Prop$.
We refer to $\expr_1$ (resp.~$\expr_2$) as the \emph{left-hand side} (\resp the \emph{right-hand side}).
The double weakest precondition is defined such that if $\All \vec i\,\vec j \in \mathbb{Z}. \dwpre{\subst{\expr}{\vec x}{\vec i}}{\subst{\expr}{\vec x}{\vec j}}{\Ret \val_1\, \val_2. \val_1 = \val_2}$ (with $\vec x$ the free variables of $e$), then $\expr$ is secure.
We defer the precise soundness statement to \Cref{sec:soundness_statement}.


\begin{figure*}
  \begin{mathpar}
\inferH{dwp-val}
{\pred(\val_1,\val_2)}
{\dwpre{\val_1}{\val_2}[\mask]{\pred}}
\and

\inferH{dwp-wand}
{\dwpre{\expr_1}{\expr_2}[\mask]{\predB}
\and (\All \val_1\,\val_2. \predB(\val_1, \val_2) \wand \pred(\val_1, \val_2))}
{\dwpre{\expr_1}{\expr_2}[\mask]{\pred}}
\and

\inferH{dwp-fupd}
{\pvs[\mask]\dwpre{\expr_1}{\expr_2}[\mask]{\Ret \val_1\,\val_2. \pvs[\mask]\pred(\val_1, \val_2)}}
{\dwpre{\expr_1}{\expr_2}[\mask]{\pred}}
\and

\inferH{dwp-bind}
{\dwpre{\expr_1}{\expr_2}{\Ret \val_1\, \val_2.%
  \dwpre{\fill\lctx_1[\val_1]}{\fill\lctx_2[\val_2]}{\pred}}}
{\dwpre{\fill\lctx_1[\expr_1]}{\fill\lctx_2[\expr_2]}{\pred}}
\and

\inferH{dwp-pure}
{\pureexec{\expr_1}{\expr'_1}
\and \pureexec{\expr_2}{\expr'_2}
\and \later \dwpre{\expr'_1}{\expr'_2}{\pred}
}
{\dwpre{\expr_1}{\expr_2}{\pred}}
\and

\inferH{dwp-fork}
{\later \dwpre{\expr_1}{\expr_2}{\TRUE} \and
  \later \pred (\unittt, \unittt)}
{\dwpre{(\Fork{\expr_1})}{(\Fork{\expr_2})}[\mask]{\pred}}
\and

\inferhref{dwp-awp}{dwp-atomic-wp}
{\awp[\mathsf{L}]{\expr_1}{\predB_1} \and
  \awp[\mathsf{R}]{\expr_2}{\predB_2} \and
  (\All \val_1, \val_2. (\predB_1(\val_1) \ast \predB_2(\val_2)) \wand \later\pred(\val_1, \val_2))}
{\dwpre{\expr_1}{\expr_2}[\mask]{\pred}}
\and
\inferH{awp-store}
 {\loc \mapstoLR \val_1 \and (\loc \mapstoLR \val_2 \wand \pred\unittt)}
{\awp[\LRvar]{\loc \gets \val_2}{\pred}}
\and
\inferH{awp-load}
{\loc \mapstoLR \val \and (\loc \mapstoLR \val \wand \pred(\val))}
{\awp[\LRvar]{\deref\loc}{\pred}}
\and
\inferH{awp-alloc}
{\All \loc. \loc \mapstoLR \val \wand \pred(\loc)}
{\awp[\LRvar]{\Ref(\val)}{\pred}}
\and
\inferH{dwp-inv-alloc}
{\prop
\and (\knowInv{\namesp}{\prop} \wand \dwpre{\expr_1}{\expr_2}{\pred})}
{\dwpre{\expr_1}{\expr_2}{\pred}}
\and
\inferH{inv-dup}{\knowInv{\namesp}{\prop}}{\knowInv{\namesp}{\prop} * \knowInv{\namesp}{\prop}}
\and
\inferhref{dwp-inv}{dwp-atomic-inv}
{\knowInv{\namesp}{\prop} \and
(\later P \wand
  \dwpre{\expr_1}{\expr_2}[\mask - \namesp]{\Ret \val_1\, \val_2. \prop \ast \pred(\val_1, \val_2)}) \and
\atomic(\expr_1) \and \atomic(\expr_2) \and
\namesp \in \mask}
{\dwpre{\expr_1}{\expr_2}[\mask]{\pred}}
\end{mathpar}
\caption{A selection of the proof rules of \thelogic.}
\label{fig:dwp_rules}
\end{figure*}

A selection of rules for double weakest preconditions\footnote{Some of the \thelogic rules involve the \emph{later modality} $\later$, which is standard for dealing with recursion and impredicative invariants \cite[Section 5.5]{irisJFP}.
The occurrences of $\later$ can be ignored for the purposes of this paper.} are given in \Cref{fig:dwp_rules}.
Some of these rules are generalizations of the ordinary weakest precondition rules (\eg \ruleref{dwp-val}, \ruleref{dwp-wand}, \ruleref{dwp-fupd}, \ruleref{dwp-bind}).
The more interesting rules are the \emph{symbolic execution} rules, which allow executing the programs on both sides in a lock-step fashion.
If both sides involve a pure-redex, we can use \ruleref{dwp-pure}.
The premises $\pureexec{\expr}{\expr'}$ denote that $\expr$ deterministically reduces to $\expr'$ without any side-effects (\eg $\pureexec{(\If \<true> then \expr \Else t)}{\expr}$).
If both sides involve a fork, we can use the rule \ruleref{dwp-fork}, which is a generalization of Iris's fork rule to the relational case.
To explain \thelogic's rules for symbolic execution of heap-manipulating expressions, we need to introduce some additional machinery:

\begin{itemize}
\item Due to \thelogic's relational nature, there are left- and right-hand side versions of the \emph{points-to connectives} \mbox{$\loc \mapstoLR \val$}, where $\LRvar \in \{\mathsf{L}, \mathsf{R}\}$, which denote that the value $\val$ of location $\loc$ in the heap associated with the left-hand side program and the right-hand side program, \resp
\item To avoid a quadratic explosion in combinations of all possible heap-manipulating expressions on the left- and the right-hand side, \thelogic includes a unary weakest precondition $\awp[\LRvar]{\expr}{\pred}$ for atomic and fork-free expressions.
  The rules for unary weakest preconditions (\eg \ruleref{awp-store}, \ruleref{awp-load}, \ruleref{awp-alloc}) are similar to those of Iris, but each rule is parameterized by a side $\LRvar \in \{\mathsf{L}, \mathsf{R}\}$.
\end{itemize}

The rule \ruleref{dwp-atomic-wp} connects $\dwp$ and $\awpsymb$.
For instance, using \ruleref{dwp-atomic-wp}, \ruleref{awp-store}, and \ruleref{awp-load}, we can derive:
\begin{mathpar}
\infer
{\loc_1 \mapstoL \val_1 \and
 \loc_2 \mapstoR \val_2 \and \\
  (\loc_1 \mapstoL \val_1 \ast \loc_2 \mapstoR \val'_2) \wand \dwpre{\val_1}{\unittt}{\pred}
}
{\dwpre{\deref \loc_1}{(\loc_2 \gets \val'_2)}{\pred}}
\end{mathpar}

\subsection{Invariants}
\label{sec:logic:example}
Let us demonstrate, by means of an example, how to use the symbolic execution rules together with the powerful invariant mechanism of Iris.
Recall the $\<rand>$ example from \Cref{sec:rand_example}.
We can use invariants to prove the following:
\begin{proposition}
\label{prop:dwp_rand}
$\dwpre{\<rand>\,()}{\<rand>\,()}{\Ret \val_1\, \val_2. \val_1 = \val_2}$.
\end{proposition}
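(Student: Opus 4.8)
The plan is to run the two copies of $\<rand>\,()$ in lock-step and to couple their freshly allocated references by an Iris invariant that forces the two references to hold equal values at all times. Writing $\loc_1$ for the location allocated on the left-hand side and $\loc_2$ for the one on the right-hand side, the invariant I would install is
\[
  I \eqdef \Exists b.\ \loc_1 \mapstoL b \ast \loc_2 \mapstoR b.
\]
The key idea is that \ruleref{dwp-fork} couples the two forked assignments, and $I$ keeps $\loc_1$ and $\loc_2$ synchronized, so that the two reads $\deref\loc_1$ and $\deref\loc_2$ are guaranteed to return the same boolean irrespective of whether the forked stores have already fired.

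First I would apply \ruleref{dwp-pure} on both sides to $\beta$-reduce $\<rand>\,()$ to $\Let x = \Ref(\<true>) in (\Fork{x \gets \<false>}; \deref x)$. Focusing on the argument $\Ref(\<true>)$ with \ruleref{dwp-bind}, I then apply \ruleref{dwp-atomic-wp} together with \ruleref{awp-alloc} on each side to obtain fresh locations with $\loc_1 \mapstoL \<true>$ and $\loc_2 \mapstoR \<true>$. Supplying these resources with $b \eqdef \<true>$ to \ruleref{dwp-inv-alloc} establishes the invariant $\knowInv{\namesp}{I}$. After the $\<let>$ binders are substituted and reduced, the remaining goal is $\dwpre{(\Fork{\loc_1 \gets \<false>}; \deref\loc_1)}{(\Fork{\loc_2 \gets \<false>}; \deref\loc_2)}{\Ret \val_1\,\val_2.\ \val_1 = \val_2}$.

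Next I would focus on the two forks with \ruleref{dwp-bind} and apply \ruleref{dwp-fork}, producing two subgoals. For the forked threads I must show $\dwpre{\loc_1 \gets \<false>}{\loc_2 \gets \<false>}{\TRUE}$: since the stores are atomic, I open the invariant with \ruleref{dwp-atomic-inv}, obtain $\loc_1 \mapstoL b \ast \loc_2 \mapstoR b$, overwrite both references using \ruleref{dwp-atomic-wp} and \ruleref{awp-store}, and re-establish $I$ with $b \eqdef \<false>$. For the continuation, which after reduction is $\dwpre{\deref\loc_1}{\deref\loc_2}{\Ret \val_1\,\val_2.\ \val_1 = \val_2}$, I again open the invariant with \ruleref{dwp-atomic-inv} (dereferences are atomic), read both sides with \ruleref{dwp-atomic-wp} and \ruleref{awp-load}, and observe that both loads return the same $b$; this discharges $\val_1 = \val_2$, after which I close the invariant.

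The main obstacle is conceptual rather than computational. Each individual run races the reference between $\<true>$ and $\<false>$ and is therefore non-deterministic, so the crux is to see why a coupling invariant is legitimate at all. The answer is that $\dwp$ relates the two runs under the same schedule---the lock-step treatment of $\<fork>$ in \ruleref{dwp-fork} matches the left forked thread with the right one---so the two references are updated together, and $I$ is genuinely preserved by every step. Getting this coupling invariant right (tying $\loc_1$ and $\loc_2$ through a single shared existential $b$) and confirming that every reference access is atomic, so that \ruleref{dwp-atomic-inv} applies, are the only real points of care.
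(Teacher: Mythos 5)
Your proposal is correct and follows essentially the same route as the paper's own proof: the same coupling invariant $\Exists b.\ \loc_1 \mapstoL b \ast \loc_2 \mapstoR b$ allocated via \ruleref{dwp-inv-alloc} after symbolically executing the allocation, then \ruleref{dwp-fork} splitting into the two subgoals, each discharged by opening the invariant with \ruleref{dwp-atomic-inv} around the atomic store and load. Your closing remarks about why the lock-step treatment of $\<fork>$ legitimizes the coupling match the paper's intent exactly.
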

\begin{proof}
First we use \ruleref{dwp-pure} to symbolically execute a $\beta$-reduction.
We then use \ruleref{dwp-bind} to ``focus'' on the $\Ref(\<true>)$ subexpression, leaving us with the goal:
\begin{gather*}
 \dwpre{\Ref(\<true>)}{\Ref(\<true>)}{\pred} \\
\begin{aligned}
\mbox{where }
 \pred(\loc_1,\loc_2) \eqdef \dwpre{&\Let x = \loc_1 in \dots}{\\[-0.2em] & \Let x = \loc_2 in \dots}{\Ret \val_1\, \val_2. \val_1 = \val_2}
\end{aligned}
\end{gather*}
We then symbolically execute the allocation, using \ruleref{dwp-atomic-wp} and \ruleref{awp-alloc}, obtaining
$\loc_1 \mapstoL \<true>$ and $\loc_2 \mapstoR \<true>$:
\begin{equation*}
\begin{aligned}
  \loc_1 \mapstoL \<true> & \ast \loc_2 \mapstoR \<true> \\
  \proves
  \dwpre{& \Fork{\loc_1 \gets \<false>}; \deref \loc_1}
     {\\[-0.2em] & \Fork{\loc_2 \gets \<false>}; \deref \loc_2}{\Ret \val_1\, \val_2. \val_1 = \val_2}
\end{aligned}
\end{equation*}

It is tempting to use \ruleref{dwp-fork}; but in both the main thread and the forked-off thread we need $\loc_1 \mapstoL -$ and $\loc_2 \mapstoR -$ to symbolically execute the dereference and assignment to $\loc_1$ and $\loc_2$.
To share the points-to connectives between both threads, we put them into an Iris-style invariant.

Iris-style invariants are logical propositions denoted as $\knowInv{\namesp}{\prop}$, which express that $\prop$ holds at all times.
Unlike in other logics, Iris-style invariants are not attached to locks.
Rather, one can explicitly open an invariant during an atomic step of execution to get access to its contents.
To create a new invariant we use the \ruleref{dwp-inv-alloc} rule,
which transfers $\prop$ into the an invariant $\knowInv{\namesp}{\prop}$ with a name space $\namesp \in \InvName$.
The transfer of $\prop$ into an invariant makes it possible to share $\prop$ between different threads (using \ruleref{inv-dup}).
To access an invariant we use the rule \ruleref{dwp-atomic-inv}.
It allow us to \emph{open} an invariant during an atomic symbolic execution step.
The \emph{masks} $\mask \subseteq \InvName$ on $\dwp$ are used to keep track of which invariants have been open.
This is done to prevent invariant reentrancy.

Returning to our example, we can use \ruleref{dwp-inv-alloc} to allocate the invariant
$
I \eqdef \knowInv{\namesp}{\Exists b \in \mathbb{B}. \loc_1 \mapstoL b \ast \loc_2 \mapstoR b}
$.
This invariant not only allows different threads to access $\loc_1$ and $\loc_2$ (via \ruleref{inv-dup}), but it also ensures that $\loc_1$ and $\loc_2$ contain the same Boolean value throughout the execution.

The proof then proceeds as follows.
We apply \ruleref{dwp-fork} and get two new goals:
\begin{enumerate}
\item $I \proves \dwpre{\loc_1 \gets \<false>}{\loc_2 \gets \<false>}{\TRUE}$;
\item $I \proves \dwpre{\deref \loc_1}{\deref \loc_2}{\Ret \val_1\, \val_2. \val_1 = \val_2}$.
\end{enumerate}
The invariant $I$ can be used for proving both goals (\ruleref{inv-dup}).
The first goal involves proving that the assignment of $\<false>$ to $\loc_1$ and $\loc_2$ is secure.
We verify this via \ruleref{dwp-atomic-inv}, and temporarily opening the invariant $I$ to obtain $\loc_1 \mapstoL b$ and $\loc_2 \mapstoR b$.
We then apply \ruleref{dwp-atomic-wp}, and symbolically execute the assignment to obtain $\loc_1 \mapstoL \<false>$ and $\loc_2 \mapstoR \<false>$.
At the end of this atomic step, we verify that the invariant $I$ still holds.

The second goal is solved in a similar way.
When we dereference $\loc_1$ and $\loc_2$ we know that they contain the same value because of the invariant $I$.
\end{proof}

\subsection{Soundness}
\label{sec:soundness_statement}
We now state \thelogic's soundness theorem, which guarantees that verified programs are actually secure \wrt \Cref{def:security}.

As we have described in \Cref{sec:security}, we fix a set $\Llocset$ of output locations that we assume to be observable by the attacker.
We require these locations to always contain the same data in both runs of the program.
To reflect this in the logic, we use an invariant that owns the observable locations and forces them to contain the same values in both heaps:
\begin{equation*}
  I_{\Llocset} \eqdef \Sep_{\loc \in \Llocset}\ \knowInv{\namesp.{(\loc,\loc)}}{\Exists i \in \mathbb{Z}. \loc \mapstoL i \ast \loc \mapstoR i}
\end{equation*}
When we verify a program under the invariant $I_{\Llocset}$, we are forced to interact with the locations in $\Llocset$ as if they are permanently publicly observable.
With this in mind we state the soundness theorem, which we prove in \Cref{sec:soundness}.
\begin{theorem}[Soundness]
\label{thm:soundness1}
Suppose that:
\[
I_{\Llocset} \vdash \All \vec i,\vec j \in \mathbb{Z}.\dwpre{\subst{\expr}{\vec{x}}{\vec{i}}}{\subst{\expr}{\vec{x}}{\vec{j}}}{\Ret \val_1\, \val_2. \val_1 = \val_2}
\]
is derivable, where $\vec{x}$ are the free variables of $\expr$, and $\vec{i}$ and $\vec{j}$ are lists of integers with $\lvert \vec{i} \rvert = \lvert \vec{j} \rvert = \lvert \vec{x} \rvert$, then:
\begin{itemize}
\item the expression $\expr$ is secure, and,
\item the configuration $(\subst{\expr}{\vec{x}}{\vec{i}}, \stateS)$ is safe (\ie cannot get stuck) for any list of integers $\vec i$, and any heap $\stateS$ with $\stateS \lowEquiv \stateS$.
\end{itemize}
\end{theorem}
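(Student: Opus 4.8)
The goal is to exhibit, for every pair of integer sequences $\vec i,\vec j$ and every heap $\stateS$ with $\stateS \lowEquiv \stateS$, a strong low-bisimulation $\RR$ relating $(\subst{\expr}{\vec x}{\vec i}, \stateS)$ and $(\subst{\expr}{\vec x}{\vec j}, \stateS)$. The plan is to read such a relation off the semantic model of the double weakest precondition, in the style of Iris's adequacy theorem, and then upgrade the resulting simulation into a genuine partial equivalence relation. Concretely, I would define a relation $\RR_0$ on configurations by declaring $(\vec{\expr}, \stateS_1) \RR_0 (\vec{\exprB}, \stateS_2)$ to hold whenever there exist Iris resources validating the two heap interpretations---tying $\stateS_1$ to the left points-to connectives $\mapstoL$ and $\stateS_2$ to the right ones $\mapstoR$---together with the invariant $I_{\Llocset}$, such that $\lvert \vec{\expr} \rvert = \lvert \vec{\exprB}\rvert$, the two main threads satisfy $\dwpre{\expr_0}{\exprB_0}{\Ret \val_1\,\val_2. \val_1 = \val_2}$, and every forked-off pair satisfies $\dwpre{\expr_k}{\exprB_k}{\TRUE}$. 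Because $\stateS \lowEquiv \stateS$ guarantees that the output locations $\Llocset$ hold equal integers in the (identical) initial heaps, the invariant $I_{\Llocset}$ can be allocated from the initial state interpretations; combined with the hypothesis instantiated at $\vec i,\vec j$, this places the two initial configurations in $\RR_0$.

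Conditions~1 and~2 of \defref{def:strong-low-bisim} fall out of the model directly: \ruleref{dwp-val} forces the two main-thread values to coincide, the length agreement $\lvert \vec{\expr} \rvert = \lvert \vec{\exprB}\rvert$ is built into $\RR_0$, and $\stateS_1 \lowEquiv \stateS_2$ follows from $I_{\Llocset}$, which owns each output location on both sides and pins it to a common integer. The heart of the proof is condition~3, for which I would establish a one-step lifting lemma: whenever $\RR_0$ holds and the left thread $\expr_i$ makes a step $(\expr_i,\stateS_1)\step(\expr'_i\vec{\expr},\stateS'_1)$, the double weakest precondition for $(\expr_i,\exprB_i)$ can be unfolded to produce a matching right step $(\exprB_i,\stateS_2)\step(\exprB'_i\vec{\exprB},\stateS'_2)$, reassembled resources validating the updated heaps and $I_{\Llocset}$, a $\dwp$ for the stepped pair, and $\dwp$'s for any freshly forked pairs---so the successor configurations lie again in $\RR_0$. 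This lemma is proved by case analysis matching the symbolic-execution rules (\ruleref{dwp-pure}, \ruleref{dwp-fork}, and the heap rules routed through \ruleref{dwp-atomic-wp}); forks are matched by \ruleref{dwp-fork}, which adds one thread to each pool and preserves length agreement, and the later/update modalities produced along the way are discharged using the soundness of Iris's fancy update. The same unfolding yields progress---each thread is a value or can step---which, taking $\vec j = \vec i$, gives the safety claim.

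It remains to turn $\RR_0$ into a partial equivalence relation. Symmetry I would obtain from a symmetry lemma for the double weakest precondition---swapping the two sides together with $\mapstoL$/$\mapstoR$, under which the postcondition $\val_1 = \val_2$ is preserved; this makes $\RR_0$ symmetric, and since each related pair then also holds with its two configurations swapped, the left-step lifting lemma alone suffices to match a step of either thread. For transitivity I would take the transitive closure $\RRR$ and re-verify the bisimulation conditions. This is where the \emph{deterministic} thread-local semantics---guaranteed by the allocation oracle of \Cref{sec:lang}---is essential: given $(\vec{\expr},\stateS_1)\RR_0(\vec{\exprB},\stateS_2)\RR_0(\vec t,\stateS_3)$ and a left step of $\expr_i$, the matching $\exprB_i$-step supplied by the first pair is, by determinism of $\step$, \emph{the} unique step of $\exprB_i$, so it is exactly the step that the second pair matches on $t_i$; the two simulations therefore compose into a single simulation, and value-headedness is preserved because a non-stepping thread in a safe configuration must be a value. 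Low-equivalence and length agreement compose by transitivity of $\lowEquiv$ and of equality, so $\RRR$ is again a strong low-bisimulation relating the initial configurations.

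I expect the main obstacle to be the one-step lifting lemma: phrasing the lock-step content of $\dwp$ as a concrete step-simulation requires unfolding the step-indexed definition, matching every operational step against the corresponding proof rule, and threading the resources and invariant masks through the fancy-update and later modalities while re-establishing $I_{\Llocset}$ at each atomic step. The determinism-based transitivity argument is conceptually delicate but short once the lifting lemma and the role of the allocation oracle are in place.
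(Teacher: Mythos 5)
Your overall strategy coincides with the paper's own proof in \Cref{appendix:soundness}: there, too, a meta-level relation relates two configurations precisely when the state relation, the invariant $I_{\Llocset}$, a double weakest precondition with postcondition $\val_1 = \val_2$ for the main threads, and $\dwpre{\expr_k}{\exprB_k}{\TRUE}$ for all forked pairs are derivable (\Cref{def:R_relation}); the same bisimulation-like properties are proved (\Cref{lem:dwp_rel}), transitivity is repaired by passing to the transitive closure $\RRR$ (\Cref{thm:strong-low-bisim}), and the initial configurations are placed in the relation from $\stateS \lowEquiv \stateS$ exactly as you describe (\Cref{prop:dwp_lift_bisim}). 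Your determinism-based justification for why the transitive closure remains a simulation---the matching middle step is unique, so the two simulations compose---is the right argument, and it is indeed where the deterministic thread-local semantics via the allocation oracle earns its keep.

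There is, however, one concrete gap in your formulation of $\RR_0$, and it is exactly the point the paper's definition is engineered around. You propose that the later/update modalities produced by one unfolding of $\dwp$ be ``discharged using the soundness of Iris's fancy update'' at each step of the lifting lemma. That does not work mid-construction: Iris's soundness theorem only eliminates prefixes of the form $\pvs[\top][\emptyset]\later\pvs[\emptyset][\top]$ from a \emph{pure} conclusion at the very end of an argument. After one step you hold $\staterel(\stateS'_1,\stateS'_2) \ast \dwpre{\expr'_1}{\expr'_2}{\pred} \ast \dotsc$ only \emph{under} such a guard, and you cannot strip the guard while retaining ownership of the resources needed to re-enter $\RR_0$---so closure under reductions fails for your relation as stated. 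The paper's fix (\Cref{def:R_relation}) is to build the guards into the relation itself: $\RR$ existentially quantifies, at the meta level, a counter $n$ and asserts the entailment $\TRUE \proves \left(\pvs[\top][\emptyset]\later\pvs[\emptyset][\top]\right)^n \pvs[\top] \big(\staterel(\stateS_1,\stateS_2) \ast I_{\Llocset} \ast \dotsc\big)$, so that matching a step simply increments $n$; the pure facts needed for \Cref{lem:dwp_rel}---value agreement, length agreement, $\stateS_1 \lowEquiv \stateS_2$, and existence of a matching step---survive stripping the accumulated modalities precisely because they are pure, and are extracted by Iris soundness only there. With this amendment (or an equivalent step-indexed satisfaction formulation of $\RR_0$), your proof goes through as planned.
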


\subsection{Protocols}
\label{sec:logic:value_dep}
Now that we have seen the basics of Iris-style invariants in \thelogic, let us use the protocol mechanism \thelogic inherits from Iris to verify the example $\mathit{prog}$ from \Cref{fig:value_dep}.
We prove the following proposition, which serves as a premise for \Cref{thm:soundness1}, and therefore implies the security of $\mathit{prog}$.

\begin{proposition}
\label{prop:dwp_prog}
For any integers $i_1, i_2 \in \mathbb{Z}$, we have
$I_{\{ \outloc \}} \proves \dwpre{\mathit{prog}\, \outloc\, i_1}{\mathit{prog}\, \outloc\, i_2}{\Ret \val_1\, \val_2. \val_1 = \val_2 = (\unittt, \unittt)}$.
\end{proposition}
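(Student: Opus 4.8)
The plan is to encode a \emph{monotone protocol} for the record $r$ as an Iris invariant, verify each thread against it in lock-step, and then glue the threads together through the fork/join structure that $\apar{-}{-}$ abbreviates. Write $\loc_d$ and $\loc_c$ for the locations backing the fields $\mathit{data}$ and $\mathit{is\_classified}$. I would use a three-state transition system $\Classified \to \Intermediate \to \Declassified$ whose transitions only move forward, backed by ghost state providing an \emph{exclusive} token $\stateToken{s}$ that records the precise current state (and grants the sole right to advance it), together with \emph{persistent} observations $\inState{s}$ acting as duplicable, monotone lower bounds. The invariant interpretation $\textlog{StateInterp}(s)$ reads: in $\Classified$, $\loc_c \mapstoL \<true> \ast \loc_c \mapstoR \<true>$ while $\loc_d$ may hold arbitrary, possibly differing values on the two sides; in $\Intermediate$, the flag is still $\<true>$ on both sides but $\loc_d \mapstoL 0 \ast \loc_d \mapstoR 0$; in $\Declassified$, $\loc_c \mapstoL \<false> \ast \loc_c \mapstoR \<false>$ and $\loc_d \mapstoL 0 \ast \loc_d \mapstoR 0$. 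Thus the flag carries the same Boolean on both sides in every state, and $\loc_d$ may differ only while the flag is $\<true>$.

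\emph{Setup and $\mathit{thread2}$.} First I would $\beta$-reduce $\mathit{prog}$ with \ruleref{dwp-pure}, then allocate the two references via \ruleref{dwp-bind}, \ruleref{dwp-atomic-wp} and \ruleref{awp-alloc}, obtaining $\loc_d \mapstoL i_1 \ast \loc_d \mapstoR i_2$ and $\loc_c \mapstoL \<true> \ast \loc_c \mapstoR \<true>$. I then allocate the ghost state in state $\Classified$ and use \ruleref{dwp-inv-alloc} to establish $J \eqdef \knowInv{\namesp}{\Exists s.\ \textlog{state\_auth}(s) \ast \textlog{StateInterp}(s)}$, depositing the initial points-to into $\textlog{StateInterp}(\Classified)$ and keeping $\stateToken{\Classified}$ for the main thread. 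The body of $\mathit{thread2}$ is two atomic assignments, each handled by \ruleref{dwp-atomic-inv} $+$ \ruleref{dwp-atomic-wp} $+$ \ruleref{awp-store}: owning $\stateToken{\Classified}$, it opens $J$, learns the state is exactly $\Classified$, writes $0$ to $\loc_d$ on both sides, and advances token and authority to $\Intermediate$; owning $\stateToken{\Intermediate}$ it then writes $\<false>$ to $\loc_c$ on both sides, advances to $\Declassified$, and returns $\unittt$ on both sides.

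\emph{$\mathit{thread1}$ and gluing.} Since $\mathit{thread1}$ diverges, I would prove its double weakest precondition by L\"ob induction. The flag read $\deref\loc_c$ is discharged with \ruleref{dwp-atomic-inv} $+$ \ruleref{dwp-atomic-wp} $+$ \ruleref{awp-load}: opening $J$, the flag has a single value $\beta$ shared by both sides, so the read is synchronized, and when $\beta = \<false>$ I additionally carve off the persistent witness $\inState{\Declassified}$. The branch on $\neg\beta$ is then synchronized (\ruleref{dwp-pure}); in the $\<true>$ case both sides reduce to $\unittt$ and the recursive call closes under the L\"ob hypothesis. In the $\<false>$ case the data read $\deref\loc_d$ re-opens $J$, and $\inState{\Declassified}$ forces the state to be the terminal $\Declassified$, so $\loc_d$ holds $0$ on both sides and the read is synchronized; the store $\outloc \gets 0$ then opens the output invariant $I_{\{\outloc\}}$ and writes the identical value $0$ on both sides, preserving $\Exists i.\ \outloc \mapstoL i \ast \outloc \mapstoR i$, after which the recursive call again uses L\"ob. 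Finally, $\apar{-}{-}$ unfolds to allocating a join cell, forking the body $x \gets \<Some>(\mathit{thread1}\,\outloc\,r)$, and running $\mathit{thread2}\,r$ in the main thread before $\mathit{join}\,x$. The forked body never passes its evaluation of $\mathit{thread1}$, so \ruleref{dwp-fork} and \ruleref{dwp-bind} reuse the $\mathit{thread1}$ proof at postcondition $\TRUE$; the join cell stays $\<None>$, so $\mathit{join}$ diverges and is dispatched by a second L\"ob induction under a small invariant for the cell. The stated postcondition $\val_1 = \val_2 = (\unittt,\unittt)$ is thus a formal, never-reached obligation.

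\emph{Main obstacle.} The crux is the protocol design: the tokens must do double duty. $\stateToken{-}$ must give $\mathit{thread2}$ exact knowledge of the state so its two writes are justified and leave the two runs synchronized at $0$, whereas $\inState{-}$ must hand $\mathit{thread1}$ a \emph{persistent} lower bound. The latter is precisely what bridges the \emph{two} separate atomic steps in $\mathit{thread1}$'s then-branch---reading the flag and only later reading $\loc_d$---ruling out situation~2: monotonicity forbids the state from regressing out of $\Declassified$, so $\loc_d$ still holds $0$, hence equal values, at the second read. The $\Intermediate$ state is what rules out situation~1, as it records that $\loc_d$ is already synchronized \emph{before} the flag is lowered. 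A secondary nuisance is threading the invariant masks correctly through the nested openings and through both L\"ob inductions.
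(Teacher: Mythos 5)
Your proposal follows the paper's own argument essentially step for step: the same three-state monotone protocol $\Classified \to \Intermediate \to \Declassified$, realized as ghost state with an exclusive token held by $\mathit{thread2}$ and a persistent/duplicable witness of the final state that lets $\mathit{thread1}$ bridge its two separate atomic reads, the same invariant tying the logical state to the points-to connectives on both sides (with $\Intermediate$ ruling out situation~1 and monotonicity ruling out situation~2), and lock-step symbolic execution with L\"ob induction for the loops. The only deviations are cosmetic: the paper discharges the parallel composition via the derived rule \ruleref{dwp-par} instead of inlining the fork/join reasoning, and its invariant keeps the declassified data as an arbitrary equal pair $\Exists i.\ \ldots$ rather than pinning it to $0$.
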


\begin{proof}
We first need a derived rule for parallel composition (which we defined in terms of $\<fork>$ in \Cref{sec:lang}).
The parallel composition operation satisfies a binary version of the standard specification in Concurrent Separation Logic~\cite{OHEARN2007}:
\begin{mathpar}
\inferH{dwp-par}
{\dwpre{\expr_1}{\exprB_1}{\predB_1} \and
\dwpre{\expr_2}{\exprB_2}{\predB_2} \\
\big(\All \val_1,\val_2,\valB_1,\valB_2.
  {\begin{array}[t]{@{} l @{}}
  (\predB_1(\val_1, \valB_1) \ast \predB_2(\val_2, \valB_2)) \wand{} \\[-0.3em]
  \pred ((\val_1, \valB_1), (\val_2, \valB_2))\big)
  \end{array}}
}
{\dwpre{(\apar{\expr_1}{\expr_2})}{(\apar{\exprB_1}{\exprB_2})}{\pred}}
\end{mathpar}
Second, we need to establish a protocol on the way the values in the record $r$ may evolve.
We identify three logical states  $\STSState \eqdef \{ \Classified, \Intermediate, \Declassified \}$ the record $r$ can be in; visualized in \Cref{fig:progproof}-a:
\begin{enumerate}
\item $\Classified$, if the data stored in the record is classified, and $\mathit{r}.\mathit{is\_classified}$ points to $\<true>$;
\item $\Intermediate$, when the data stored in the record is not classified anymore, but $\mathit{r}.\mathit{is\_classified}$ still points to $\<true>$;
\item $\Declassified$, when the data stored in the record is not classified and $\mathit{r}.\mathit{is\_classified}$ points to $\<false>$.
  This state is final in the sense that once the state of the record becomes $\Declassified$, it forever
  remains so.
\end{enumerate}
The idea behind the proof is as follows: we use an invariant to track the logical state together with the points-to connectives for the physical state of the record.
This way, we ensure that the protocol is followed by both threads.

\begin{figure}
\paragraph{The protocol as a transition system}

\begin{center}
\medskip
\begin{tikzpicture}
\node[state]
    (classified)   {$\Classified$};
\node[state,right of=classified]
    (intermediate) {$\Intermediate$};
\node[state,right of=intermediate,accepting]
    (declassified) {$\Declassified$};
\draw  (classified)    edge  (intermediate)
       (intermediate)  edge             (declassified);
\end{tikzpicture}
\end{center}
\paragraph{The rules for ghost state}
\begin{center}
\small
\vspace{-1em}
\begin{mathpar}
\axiomH{state-alloc}
{\pvs[\mask] \Exists \gname. \inState{\Classified} \ast \stateToken{\Classified}}
\and
\inferH{state-agree}
{\inState{s_1} \and \stateToken{s_2}}
{s_1 = s_2}
\and
\inferH{state-change}
{\tikz[anchor=base, baseline] {\node (s1) {$s_1$};\node[right=1em of s1] (s2) {$s_2$};
  \draw (s1) edge (s2);} \and \inState{s_1} \and \stateToken{s_1}}
{\pvs[\mask] \inState{s_2}\ast\stateToken{s_2}}
\and
\inferH{declassified-dup}
{\stateToken{\Declassified}}
{\stateToken{\Declassified} \ast \stateToken{\Declassified}}
\end{mathpar}
\end{center}

\paragraph{The invariant}
\begin{equation*}
\knowInv{\namesp}{\small
\begin{aligned}
\big(&\inState{\Classified} \ast{} \Exists i_1, i_2.
   r_1.\mathit{is\_classified} \mapstoL \<true> \ast{} \\
   &r_2.\mathit{is\_classified} \mapstoR \<true> \ast{}
   r_1.\mathit{data} \mapstoL i_1 \ast{}
   r_2.\mathit{data} \mapstoR i_2 \big){} \\[0.3em]
\vee\big(&\inState{\Intermediate} \ast{} \Exists i.
   r_1.\mathit{is\_classified} \mapstoL \<true> \ast{} \\
   &r_2.\mathit{is\_classified} \mapstoR \<true> \ast{}
    r_1.\mathit{data} \mapstoL i \ast{}
   r_2.\mathit{data} \mapstoR i\big){} \\[0.3em]
\vee\big(&\inState{\Declassified} \ast{} \Exists i.
   r_1.\mathit{is\_classified} \mapstoL \<false> \ast{}\\
   &r_2.\mathit{is\_classified} \mapstoR \<false> \ast{}
    r_1.\mathit{data} \mapstoL i \ast{}
   r_2.\mathit{data} \mapstoR i \big){}
\end{aligned}}
\end{equation*}
\caption{Value-dependent classification.}
\label{fig:progproof}
\end{figure}

To model the protocol in \thelogic, we use Iris's mechanism for user-defined ghost state.
The exact way this mechanism works is described in~\cite{iris1,irisJFP}, but is not important for this paper.
What is important is that via this mechanism we can define predicates $\inState{}, \stateToken{} : \STSState \to \Prop$ that satisfy the rules in \Cref{fig:progproof}-b.
The predicate $\inState{}$ will be shared using an invariant, while $\mathit{thread2}$ will own the predicate $\stateToken{}$.
Rule \ruleref{state-agree} states that the predicates $\inState{}$ and $\stateToken{}$ agree on the logical state.
If a thread owns both predicates, it can change the logical state using \ruleref{state-change}, but only in a way that respects the
transition system.
Rule \ruleref{declassified-dup} states that once a thread learns that the record is in the final state, \ie $\Declassified$, this knowledge remains true forever.
The predicates are indexed by a \emph{ghost name} $\gname$ to allow for
different instances of the transition system using \ruleref{state-alloc}.
\Cref{fig:progproof}-c displays the invariant that ties together the ghost and physical state.
It is defined for the records $r_1$ and $r_2$ on the left- and the right-hand side, \resp
We verify each thread separately with respect to this invariant, which we open every time we access the record.

\paragraph*{Proof of the complete program}
We symbolically execute the allocation of the records $r_1$ and $r_2$, giving us the resources $r_1.\mathit{is\_classified} \mapstoL \<true>$, $r_2.\mathit{is\_classified} \mapstoR \<true>$, 
$r_1.\mathit{data} \mapstoL i_1$, and $r_2.\mathit{data} \mapstoR i_2$.
We then use \ruleref{state-alloc} to obtain $\inState{\Classified}$ and $\stateToken{\Classified}$.
With these resources at hand, we use \ruleref{dwp-inv-alloc} to establish the invariant in \Cref{fig:progproof}-c, which can be shared between both threads.
We use \ruleref{dwp-par} with $\predB_1(\val_1, \val_2) \eqdef \predB_2(\val_1, \val_2) \eqdef (\val_1 = \val_2 = \unittt)$, and use the token $\stateToken{\Classified}$ for the proof of the second thread.

\paragraph*{Proof of $\mathit{thread1}$}
We use the symbolic execution rules for dereferencing $r_1.\mathit{is\_classified}$ and $r_2.\mathit{is\_classified}$ until both of them become $\<false>$.
At that point, the invariant tells us that we are in the $\Declassified$ state.
Subsequently, when using the symbolic execution rule for dereferencing $r_1.\mathit{data}$ and $r_2.\mathit{data}$, we use a copy of the predicate $\stateToken{\Declassified}$ to determine that the last disjunct of the invariant must hold.
From that, we know that both $r_1.\mathit{data}$ and $r_2.\mathit{data}$ contain the same value.
Using this information we can safely symbolically execute the assignments to the output location $\outloc$.

\paragraph*{Proof of $\mathit{thread2}$}
We start the proof with the initial predicate $\stateToken{\Classified}$ and update the logical state with each assignment.
The complete formalized proof can be found in the Coq mechanization.
\end{proof}


\section{Type system and logical relations}
\label{sec:logrel}

We show how to define a type system for non-interference as an abstraction on top of \thelogic using the technique of \emph{logical relations}.
While logical relations have been used to model type systems and logics for safety and contextual refinement in (variants of) Iris before~\cite{irisIPM,irisEffects,irisRunSt,reloc,RustBelt}, we---for the first time---use logical relations in Iris to  model a type system for non-interference (\Cref{sec:logrel:model}).
We moreover show how we can combine type-checked code with code that has been manually verified using double weakest preconditions in \thelogic (\Cref{sec:logrel:combine}).

The types that we consider are as follows:
\[
\type \in \Type \bnfdef \tunit \ALT
  \tint^\lvlvar \ALT
  \tbool^\lvlvar \ALT
  \type \times \type' \ALT
  \tref{\type} \ALT
  (\type \to \type')^\lvlvar
\]
Here, $\lvlvar, \lvlvarB \in \Slevel$ range over the \emph{sensitivity labels} $\{ \low, \high \}$ that form a lattice with $\low \sqsubseteq \high$.
While any bounded lattice will do, we use the two-element lattice for brevity's sake.

\begin{figure}
\begin{mathpar}
\inferH{typed-if}
  {\typed{\Gamma}{\expr}{\tbool^\low}
   \and \typed{\Gamma}{t}{\type} \and \typed{\Gamma}{u}{\type}}
  {\typed{\Gamma}{{\If \expr then t \Else u}}{\type}}
\and
\inferH{typed-if-flat}
  {\typed{\Gamma}{\expr}{\tbool^\high} \and
   \typed{\Gamma}{v}{\type} \and \typed{\Gamma}{w}{\type} \\
   \mbox{$v$, $w$ are values or variables in $\Gamma$} \and
   \mbox{$\type$ is \flatType}}
  {\typed{\Gamma}{{\If \expr then v \Else w}}{\type}}
\and
\inferH{typed-store}
  {\typed{\Gamma}{\expr}{\tref{\type}}
   \and \typed{\Gamma}{t}{\type}}
  {\typed{\Gamma}{\expr \gets t}{\tunit}}
\and
\inferH{typed-out}
  {\loc \in \Llocset}
  {\typed{\Gamma}{\loc}{\tref{\tint^\low}}}
\end{mathpar}
\caption{A selection of the typing rules.}
\label{fig:typing_rules}
\end{figure}

The typing judgment is of the form $\typed{\Gamma}{\expr}{\type}$, where $\Gamma$ is an assignment of variables to types, $\expr$ is an expression, and $\type$ is a type.
Some typing rules are given in \Cref{fig:typing_rules}, and the rest can be found in \Cref{appendix:types}.
The rule \ruleref{typed-out} shows that every output location $\loc \in \Llocset$ is typed as a reference to a low-sensitivity integer.
By $\type \sqcup \lvlvarB$ we denote the \emph{level stamping}, \eg $\tint^\lvlvar \sqcup \lvlvarB = \tint^{\lvlvar \sqcup \lvlvarB}$.
See \Cref{appendix:types} for the full definition.

To type check the set data structure from \Cref{sec:examples}, we need to support benign branching on high-sensitivity Booleans.
For that purpose we use the rule \ruleref{typed-if-flat}.
In the rule, both branches should either be values or variables, ensuring that they do not perform any computations.
In addition, both branches should be of a \emph{flat type}---$\tint^\high$, $\tbool^\high$, or a product of two flat types.
Function types are not flat because they can leak via timing behavior outside the $\<if>$ branch itself.

Notice that our type system has no sensitivity labels on reference types, and no program counter label on the typing judgment.
While such labels are common in security type systems for languages with (higher-order) references \cite{flowcaml,Terauchi08,RajaniGargModel,ZdancewicThesis}, a
direct adaptation of such type systems is not sound with respect to the termination-sensitive notion of non-interference we consider.
A counterexample is provided in \Cref{sec:discussion:ref_labels}.

\begin{figure}
\begin{align*}
\interp{\tunit}(\val_1, \val_2) \eqdef{} & \val_1 = \val_2 = \unittt \\
\interp{\tint^\lvlvar}(\val_1, \val_2) \eqdef{} & \val_1, \val_2 \in \mathbb{Z} \ast (\lvlvar = \low \to \val_1 = \val_2)\\
\interp{\tbool^\lvlvar}(\val_1, \val_2) \eqdef{} & \val_1, \val_2 \in \mathbb{B} \ast (\lvlvar = \low \to \val_1 = \val_2)\\
\interp{\type \times \type'}(\val_1, \val_2) \eqdef{}&
   \Exists \valB_1, \valB_2, \valB'_1, \valB'_2.\\[-0.2em]
   & \quad \val_1 = (\valB_1, \valB'_1) \ast \val_2 = (\valB_2, \valB'_2) \ast{} \\[-0.2em]
   & \quad \interp{\type}(\valB_1, \valB_2) \ast
   \interp{\type'}(\valB'_1, \valB'_2)\\
\interp{\tref{\type}}(\val_1, \val_2) \eqdef{}&
  \val_1, \val_2 \in \Loc \ast{} \\[-0.4em]
   & \knowInv{\namesp.(\val_1, \val_2)}{
    \begin{array}[t]{@{} l @{}}
    \Exists \valB_1\, \valB_2. \val_1 \mapstoL \valB_1 \ast{}\\
    \quad \val_2 \mapstoR \valB_2 \ast
    \interp{\type}(\valB_1, \valB_2)
    \end{array}} \\
\interp{(\type \to \type')^\lvlvar}(\val_1, \val_2) \eqdef{} &
   \always \big(
     \begin{array}[t]{@{} l @{}}
     \All \valB_1,\valB_2. \interp{\type}(\valB_1, \valB_2) \wand{}\\
     \quad \interpE{\type' \sqcup \lvlvar}{(\val_1\; \valB_1)}{(\val_2\; \valB_2)}\big)
     \end{array}\\
\interpE{\type}(\expr_1, \expr_2) \eqdef{} &
  \dwpre{\expr_1}{\expr_2}{\interp{\type}}
\end{align*}
\caption{The logical relations interpretation of types.}
\label{fig:semtypes}
\end{figure}

\begin{figure}
\begin{mathpar}
\inferH{logrel-if-low}
{\dwpre{\expr_1}{\expr_2}{\interp{\tbool^\low}} \\
\and \dwpre{t_1}{t_2}{\pred}
\and \dwpre{u_1}{u_2}{\pred}}
{\dwpre{\If \expr_1 then t_1 \Else u_1}{\If \expr_2 then t_2 \Else u_2}{\pred}}
\and
\inferH{logrel-store}
{\dwpre{\expr_1}{\expr_2}{\interp{\tref{\type}}}
\and
\dwpre{t_1}{t_2}{\interp{\type}}
}
{\dwpre{(\expr_1 \gets t_1)}
  {(\expr_2 \gets t_2)}
  {\interp{\tunit}}}
\end{mathpar}
\caption{A selection of compatibility rules.}
\label{fig:logrel_rules}
\end{figure}

\subsection{Logical relations model}
\label{sec:logrel:model}

We give a semantic model of our type system using logical relations.
The key idea of logical relations is to interpret each type $\type$ as a relation on values, \ie to each type $\type$ we assign an \emph{interpretation} $\interp{\type} : \Val \times \Val \to \Prop$ where $\Prop$ is the type of \thelogic propositions.
Intuitively, $\interp{\type}(\val_1, \val_2)$ expresses that $\val_1$ and $\val_2$ of type $\type$ are indistinguishable by a low-sensitivity attacker.
The definition of $\interp{\type}$ is given in \Cref{fig:semtypes}.
We will now explain some interesting cases in detail.

The interpretation $\interp{\tint^\low}$ contains the pairs of equal integers, while $\interp{\tint^\high}$ contains the pairs of any two integers.
This captures the intuition that a low-sensitivity attacker can observe low-sensitivity integers, but not high-sensitivity integers.

The interpretation $\interp{\tref{\type}}$ captures that references $\loc_1$ and $\loc_2$ are indistinguishable iff they always hold values $\valB_1$ and $\valB_2$ that are indistinguishable at type $\type$.
This is formalized by imposing an invariant that contains both points-to propositions $\loc_1 \mapstoL \valB_1$ and $\loc_2 \mapstoR \valB_2$, as well as the interpretation of $\type$ that links the values $\valB_1$ and $\valB_2$.
Notice that our interpretation of references does not require the locations $\loc_1$ and $\loc_2$ themselves to be syntactically equal.
This is crucial for modeling dynamic allocation (recall that the allocation oracle described in \Cref{sec:lang} may depend on the contents of the heap).

The interpretation $\interp{(\type \to \type')^\lvlvar}$ captures that functions $\val_1$ and $\val_2$ are indistinguishable iff for all inputs $\valB_1$ and $\valB_2$ indistinguishable at type $\type$, the behaviors of the expressions $\val_1\;\valB_1$ and $\val_2\;\valB_2$ are indistinguishable at type $\type' \sqcup \lvlvar$.
To formalize what it means for the behavior of expressions (in this case $\val_1\;\valB_1$ and $\val_2\;\valB_2$) to be indistinguishable, we define the \emph{expression interpretation} $\interpE{\type} : \Expr \times \Expr \to \Prop$ by lifting the value interpretation using double weakest preconditions.

The interpretation of functions is defined using the \emph{persistence modality} $\always$ of Iris~\cite[Section 2.3]{irisJFP}.
Intuitively, $\always \prop$ states that $\prop$ holds without asserting ownership of any non-shareable resources.
Having the persistence modality in this definition 
is common in logical relations in Iris~\cite{irisIPM}---it ensures that indistinguishable functions remain indistinguishable forever.

The interpretation of expressions $\interpE{\_}$ generalizes to open terms by considering all possible well-typed substitutions.
A (binary) substitution $\gamma$ is a function $\Var \to \Val \times \Val$.
We write $\gamma_i(\expr)$ for a term $\expr$ where each free variable $x$ is substituted by $\pi_i(\gamma(x))$.
We say that a substitution $\gamma$ is well-typed, denoted as $\interp{\Gamma}(\gamma)$, iff  $\All x. \interp{\type}(\gamma(\Gamma(x)))$.
We then define the \emph{semantic typing judgment} as follows:
\[
\typedS{\Gamma}{\expr}{\type} \eqdef
\All \gamma. (\interp{\Gamma}(\gamma) \ast I_{\Llocset}) \wand
  \interpE\type(\gamma_1(\expr), \gamma_2(\expr))
\]
Here, $I_{\Llocset}$ is the invariant on the observable locations (\Cref{sec:soundness_statement}).

\begin{theorem}[Soundness]
\label{thm:adequacy}
If $\typedS{x_1 : \tint^\high, \dots, x_n : \tint^\high}{\expr}{\tint^\low}$ is a derivable in \thelogic, then $\expr$ is secure.
\end{theorem}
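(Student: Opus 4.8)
The plan is to reduce the statement of \thmref{thm:adequacy} to the already-established soundness theorem \Cref{thm:soundness1}. The key observation is that \Cref{thm:soundness1} requires a double weakest precondition for the \emph{syntactically substituted} program $\subst{\expr}{\vec{x}}{\vec{i}}$ whose postcondition asserts value equality, whereas the semantic typing judgment $\typedS{\Gamma}{\expr}{\tint^\low}$ gives us something quantified over \emph{all} well-typed binary substitutions $\gamma$ with postcondition $\interp{\tint^\low}$. So the work is entirely in massaging the semantic judgment into the exact shape of the premise of \Cref{thm:soundness1}, and then invoking that theorem as a black box.

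\textbf{Key steps.} First I would unfold the definition of the semantic typing judgment at the context $\Gamma = (x_1 : \tint^\high, \dots, x_n : \tint^\high)$: we have $I_{\Llocset} \wand \interpE{\tint^\low}(\gamma_1(\expr), \gamma_2(\expr))$ for every $\gamma$ satisfying $\interp{\Gamma}(\gamma)$. Next, given arbitrary integer lists $\vec{i}, \vec{j}$ with $\lvert \vec{i}\rvert = \lvert\vec{j}\rvert = n$, I would construct the particular substitution $\gamma$ defined by $\gamma(x_k) \eqdef (i_k, j_k)$. I would then check that this $\gamma$ is well-typed, i.e.\ $\interp{\Gamma}(\gamma)$ holds: for each $k$ we need $\interp{\tint^\high}(i_k, j_k)$, which by \figref{fig:semtypes} unfolds to $i_k, j_k \in \mathbb{Z}$ together with the vacuously-satisfied implication $\high = \low \to i_k = j_k$; this holds because $i_k, j_k$ are integers. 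With $\gamma$ well-typed, instantiating the semantic judgment yields $I_{\Llocset} \wand \interpE{\tint^\low}(\gamma_1(\expr), \gamma_2(\expr))$. Since $\gamma_1(\expr) = \subst{\expr}{\vec{x}}{\vec{i}}$ and $\gamma_2(\expr) = \subst{\expr}{\vec{x}}{\vec{j}}$, and since $\interpE{\tint^\low}(\expr_1, \expr_2) \eqdef \dwpre{\expr_1}{\expr_2}{\interp{\tint^\low}}$ by definition, I would unfold $\interp{\tint^\low}(\val_1, \val_2)$, which asserts $\val_1, \val_2 \in \mathbb{Z} \ast (\low = \low \to \val_1 = \val_2)$, hence in particular $\val_1 = \val_2$. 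Using \ruleref{dwp-wand} I would weaken the postcondition $\interp{\tint^\low}$ to the postcondition $\Ret \val_1\,\val_2.\,\val_1 = \val_2$ required by \Cref{thm:soundness1}.

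\textbf{Assembling the premise and concluding.} At this point I would have, for every $\vec{i}, \vec{j}$, the entailment $I_{\Llocset} \vdash \dwpre{\subst{\expr}{\vec{x}}{\vec{i}}}{\subst{\expr}{\vec{x}}{\vec{j}}}{\Ret \val_1\,\val_2.\,\val_1 = \val_2}$. Pulling the universal quantifier over $\vec{i}, \vec{j}$ to the front gives exactly the hypothesis of \Cref{thm:soundness1}, so that theorem delivers that $\expr$ is secure (and, as a bonus, safe), which is precisely the conclusion we want.

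\textbf{Main obstacle.} The proof is essentially a definitional unfolding plus one application of \Cref{thm:soundness1}, so there is no deep obstacle. The one point requiring care is the quantifier structure: the semantic judgment quantifies over substitutions $\gamma$ \emph{inside} an Iris entailment under the $I_{\Llocset}$ assumption, whereas \Cref{thm:soundness1} wants the integer quantifiers $\All \vec{i}, \vec{j}$ at the meta-level in front of a single entailment. Commuting these correctly—recognizing that specializing $\gamma$ at the meta-level for each choice of $\vec{i},\vec{j}$ and then re-quantifying yields the right shape—is the only step where a careless reader might slip, but it is routine once the two definitions are written side by side.
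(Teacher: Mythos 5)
Your proposal is correct and matches the paper's proof, which simply states that the theorem is ``a direct consequence of \Cref{thm:soundness1}''; your write-up just fills in the routine unfolding (choosing $\gamma(x_k) = (i_k, j_k)$, noting $\interp{\tint^\high}$ is satisfied by any pair of integers, weakening $\interp{\tint^\low}$ to value equality via \ruleref{dwp-wand}, and introducing the universal quantifiers inside the entailment) that the paper leaves implicit.
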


\begin{proof}
This a direct consequence of \Cref{thm:soundness1}.
\end{proof}

The \emph{fundamental property} of logical relations states that any program that can be type checked is semantically typed.

\begin{proposition}[Fundamental property]
\label{thm:fundamental}
If $\typed{\Gamma}{\expr}{\type}$, then
$\typedS{\Gamma}{\expr}{\type}$ is derivable in \thelogic.
\end{proposition}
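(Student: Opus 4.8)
The plan is to proceed by induction on the typing derivation $\typed{\Gamma}{\expr}{\type}$. For each typing rule, I would establish a corresponding \emph{compatibility lemma} that shows the semantic typing judgment is closed under that rule: assuming the semantic judgments for the premises, derive the semantic judgment for the conclusion. Since $\typedS{\Gamma}{\expr}{\type}$ unfolds to a statement about all well-typed substitutions $\gamma$ producing related double weakest preconditions $\dwpre{\gamma_1(\expr)}{\gamma_2(\expr)}{\interp{\type}}$, each compatibility lemma becomes a statement purely in \thelogic about $\dwp$, provable using the proof rules of \Cref{fig:dwp_rules}. Two representative cases are already displayed as \ruleref{logrel-if-low} and \ruleref{logrel-store} in \Cref{fig:logrel_rules}; the remaining rules follow the same pattern.

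First I would handle the structural and value-former cases. For variables, the result is immediate from the definition of $\interp{\Gamma}(\gamma)$ together with \ruleref{dwp-val}. For pairs, integers, Booleans, and unit, I would symbolically execute the subexpressions using \ruleref{dwp-bind} to focus on each component, apply the induction hypotheses, and then reassemble the value using \ruleref{dwp-val}, discharging the value interpretation from \Cref{fig:semtypes}. The store case (\ruleref{logrel-store}) would proceed by first reducing both subexpressions to related values at types $\tref{\type}$ and $\type$ via \ruleref{dwp-bind}, then opening the reference invariant from $\interp{\tref{\type}}$ using \ruleref{dwp-atomic-inv}, performing the assignment on both sides with \ruleref{dwp-atomic-wp} and \ruleref{awp-store}, and re-establishing the invariant with the new (still $\type$-related) values. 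Allocation, dereferencing, and function application/abstraction are analogous: abstraction uses the persistence modality $\always$ in $\interp{(\type \to \type')^\lvlvar}$ and the fact that the semantic context $\interp{\Gamma}(\gamma)$ together with $I_\Llocset$ is persistent, while application unfolds the arrow interpretation and instantiates it at the related arguments.

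The genuinely interesting cases are the two conditionals. For \ruleref{typed-if} (low guard), the guard evaluates to \emph{equal} Boolean values on both sides by $\interp{\tbool^\low}$, so both runs take the same branch and I can simply appeal to the induction hypothesis for whichever branch is selected—this is exactly \ruleref{logrel-if-low}. The hard part will be \ruleref{typed-if-flat}, the benign branching on a high-sensitivity guard: here the two runs may take \emph{different} branches, so I cannot reduce in lock-step through the $\<if>$ in the usual way. Instead I would exploit the structural restrictions in the rule's premises—both branches are values or variables of a \emph{flat} type $\type$—to argue that both branches reduce in the same number of steps (namely zero computation steps) and that any two values of a flat type $\tint^\high$, $\tbool^\high$, or a product thereof are \emph{unconditionally} related by $\interp{\type}$ regardless of which branch produced them. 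Concretely I would perform the pure reduction of each $\<if>$ independently using \ruleref{dwp-pure} on each side (made possible because no side effects or timing differences arise when the branches are mere values/variables), and then observe that flatness guarantees $\interp{\type}(\val_1,\val_2)$ holds for the arbitrary pair of resulting values. Establishing this flatness invariant—that the interpretation of a flat type imposes no equality constraint linking the two sides—is the key technical lemma, and I would prove it by a straightforward induction on the structure of flat types. The remaining administrative rules (subtyping/level-stamping, context weakening) reduce to monotonicity properties of $\interp{\cdot}$ that can be verified by inspection of \Cref{fig:semtypes}.
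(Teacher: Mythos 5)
Your proposal takes essentially the same route as the paper: induction on the typing derivation discharged by per-rule compatibility lemmas proved in \thelogic (the paper's \Cref{fig:compat_rules_appendix} lists them, including \ruleref{logrel-if-flat'}, whose justification rests on exactly your key observation that flat-type interpretations impose no cross-side equality constraint, so the two mismatched branches remain related). One small precision: \ruleref{dwp-pure} steps both sides \emph{simultaneously} while permitting \emph{different} pure redexes, so the two conditionals reduce into their (possibly different) branches in one joint lock-step application rather than ``independently on each side''---this is what keeps the timing behavior aligned.
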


%
%
\begin{proof}
This proposition is proved by induction on the typing judgment $\typed{\Gamma}{\expr}{\type}$ using so-called \emph{compatibility rules} for each case.
A selection of these rules is shown in \Cref{fig:logrel_rules}.
\end{proof}

\subsection{Typing via manual proof}
\label{sec:logrel:combine}

When composing the fundamental property (\Cref{thm:fundamental}) and the soundness theorem (\Cref{thm:adequacy}) we obtain that any typed program is secure.
For instance, it allows us to show that the $\<rand>$ program is secure by type checking it, instead of performing a manual proof as done in \Cref{prop:dwp_rand}.

However, semantic typing gives us more---it allows us to combine type-checked code with manually verified code.
Let us consider the examples from \Cref{sec:examples}, which are not typed according to the typing rules, but which we can \emph{prove} to be semantically typed by dropping down to the interpretation of the semantic typing judgment in terms of double weakest preconditions.
\begin{proposition}
\label{prop:dwp_prog2}
$\typedS{}{\mathit{prog}}{\tref{\tint^\low} \to \tint^\high \to \tunit \times \tunit}$.
\end{proposition}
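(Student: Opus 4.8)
The plan is to unfold the semantic typing judgment and reduce the goal to a mild generalization of \Cref{prop:dwp_prog}. Since the typing context is empty, the hypothesis $\interp{\Gamma}(\gamma)$ is vacuously $\TRUE$ and the closed term $\mathit{prog}$ is unaffected by the substitution $\gamma$, so unfolding $\typedS{}{\mathit{prog}}{\cdots}$ leaves the obligation
\[
  I_{\Llocset} \wand \dwpre{\mathit{prog}}{\mathit{prog}}{\interp{\tref{\tint^\low} \to \tint^\high \to \tunit \times \tunit}}.
\]
Because $\mathit{prog}$ is a value, I would discharge the double weakest precondition with \ruleref{dwp-val}, reducing the goal to the pure proposition $\interp{\tref{\tint^\low} \to \tint^\high \to \tunit \times \tunit}(\mathit{prog}, \mathit{prog})$, where I take the (omitted) arrow labels to be $\low$, so that all level-stampings are the identity.

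Next I would strip the two arrows. Unfolding the outer function interpretation introduces, under the persistence modality $\always$, arguments $\valB_1, \valB_2$ with $\interp{\tref{\tint^\low}}(\valB_1,\valB_2)$ and demands $\interpE{\tint^\high \to \tunit \times \tunit}(\mathit{prog}\,\valB_1, \mathit{prog}\,\valB_2)$. Each $\mathit{prog}\,\valB_i$ $\beta$-reduces to the inner closure, so \ruleref{dwp-pure} and \ruleref{dwp-val} let me unfold the inner arrow as well, introducing integers $i_1,i_2$ with $\interp{\tint^\high}(i_1,i_2)$, i.e.\ \emph{arbitrary} integers. The residual goal is $\interpE{\tunit \times \tunit}(\mathit{prog}\,\valB_1\,i_1, \mathit{prog}\,\valB_2\,i_2)$, which unfolds to $\dwpre{\mathit{prog}\,\valB_1\,i_1}{\mathit{prog}\,\valB_2\,i_2}{\interp{\tunit \times \tunit}}$. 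As $\interp{\tunit \times \tunit}(\val_1,\val_2)$ holds exactly when $\val_1 = \val_2 = (\unittt,\unittt)$, this is precisely the conclusion of \Cref{prop:dwp_prog}, except that the single output location $\outloc$ is replaced by the pair $\valB_1, \valB_2$.

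The key to closing this gap is that $\interp{\tref{\tint^\low}}(\valB_1,\valB_2)$ hands us an invariant $\knowInv{\namesp.(\valB_1,\valB_2)}{\Exists \valB'_1\,\valB'_2.\, \valB_1 \mapstoL \valB'_1 \ast \valB_2 \mapstoR \valB'_2 \ast \interp{\tint^\low}(\valB'_1,\valB'_2)}$, and since $\interp{\tint^\low}$ forces the two stored integers to coincide, this invariant has exactly the shape of the output invariant $I_{\{\outloc\}}$ used in \Cref{prop:dwp_prog}. I would therefore replay that proof almost verbatim: allocate the value-dependent classification protocol of \Cref{fig:progproof} via \ruleref{state-alloc} and \ruleref{dwp-inv-alloc}, split the parallel composition with \ruleref{dwp-par}, and verify the two threads separately, substituting the reference invariant above wherever the original proof used $I_{\{\outloc\}}$ --- in particular for the declassified write that $\mathit{thread1}$ performs to its output reference. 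The main obstacle, and the only genuine difference from \Cref{prop:dwp_prog}, is that $\valB_1$ and $\valB_2$ need no longer be the single syntactic location $\outloc$, so $\mathit{thread1}$ now writes to two possibly distinct references. This is exactly what $\interp{\tref{\tint^\low}}$ resolves: opening its invariant at the output write exhibits a common low integer on both sides, so low-equivalence of the observable heaps is preserved despite the differing locations. The remaining thread-local reasoning --- the transition-system bookkeeping with $\inState{}$ and $\stateToken{}$ and the case analysis on the classification flag --- is identical to \Cref{prop:dwp_prog} and requires no change.
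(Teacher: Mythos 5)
Your proposal is correct and takes essentially the same route as the paper, whose entire proof of this proposition is that it is ``a direct consequence of \Cref{prop:dwp_prog}''---precisely the reduction you carry out by unfolding the semantic typing judgment, stripping the two arrows, and replaying the protocol proof. The one point where you go beyond the paper's one-liner is also the right one: as literally stated, \Cref{prop:dwp_prog} fixes the single syntactic location $\outloc$ under the invariant $I_{\{\outloc\}}$, whereas semantic typing quantifies over arbitrary pairs of references related by $\interp{\tref{\tint^\low}}$, and your observation that the invariant supplied by $\interp{\tref{\tint^\low}}(\valB_1,\valB_2)$ has exactly the shape of $I_{\{\outloc\}}$ (two points-to assertions holding a common integer) is what makes the verbatim replay go through.
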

\begin{proof}
This is a direct consequence of \Cref{prop:dwp_prog}.
\end{proof}

\begin{proposition}
${\typedS{}{\mathit{awk}}{\tint^\high \to (\tunit \to \tunit)^\low \to \tint^\low}}$.
\end{proposition}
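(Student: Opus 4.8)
The plan is to unfold the semantic typing judgment directly on the value $\mathit{awk}$. With an empty context, $\typedS{}{\mathit{awk}}{\tint^\high \to (\tunit \to \tunit)^\low \to \tint^\low}$ amounts to $I_{\Llocset} \wand \interpE{\tint^\high \to (\tunit \to \tunit)^\low \to \tint^\low}(\mathit{awk}, \mathit{awk})$, and since $\mathit{awk}$ is a value this reduces, via \ruleref{dwp-val}, to proving the value relation $\interp{\tint^\high \to (\tunit \to \tunit)^\low \to \tint^\low}(\mathit{awk}, \mathit{awk})$ (reading the two unannotated outer arrows as $\low$-labelled, so that level stamping is trivial). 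Unfolding the function interpretation, I assume $\interp{\tint^\high}(v_1, v_2)$ --- that is, $v_1, v_2 \in \mathbb{Z}$ with no equality constraint, since the label is $\high$ --- and must establish $\dwpre{\mathit{awk}\,v_1}{\mathit{awk}\,v_2}{\interp{(\tunit \to \tunit)^\low \to \tint^\low}}$.

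First I would symbolically execute the two applications in lock step: \ruleref{dwp-pure} for the $\beta$-redex on both sides, then \ruleref{dwp-bind} to focus on the subexpression $\Ref(v_i)$, and \ruleref{dwp-atomic-wp} with \ruleref{awp-alloc} to allocate the local references, yielding locations $x_1, x_2$ together with the (non-persistent) resources $x_1 \mapstoL v_1$ and $x_2 \mapstoR v_2$. After substitution both sides reduce to the closure $g_i \eqdef \Lam f.\, x_i \gets 1; f(); \deref x_i$, so the goal becomes $\dwpre{g_1}{g_2}{\interp{(\tunit \to \tunit)^\low \to \tint^\low}}$.

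The crux is a monotone protocol on $(x_1, x_2)$, analogous to the value-dependent classification of \Cref{fig:progproof} but with only two states $\mathsf{High} \to \mathsf{Low}$, where $\mathsf{Low}$ is final and its witness $\stateToken{\mathsf{Low}}$ is duplicable (cf.\ \ruleref{declassified-dup}). Using \ruleref{state-alloc} I obtain $\inState{\mathsf{High}}$ and $\stateToken{\mathsf{High}}$, and then --- while the goal is still a $\dwp$ on the values $g_1, g_2$ --- I use \ruleref{dwp-inv-alloc} to move the points-to connectives and the ghost state into the invariant
\[
I_x \eqdef \knowInv{\namesp.(x_1,x_2)}{(\inState{\mathsf{High}} \ast \stateToken{\mathsf{High}} \ast \Exists w_1\,w_2.\, x_1 \mapstoL w_1 \ast x_2 \mapstoR w_2) \vee (\inState{\mathsf{Low}} \ast \stateToken{\mathsf{Low}} \ast x_1 \mapstoL 1 \ast x_2 \mapstoR 1)}.
\]
After this, \ruleref{dwp-val} reduces the goal to $\interp{(\tunit \to \tunit)^\low \to \tint^\low}(g_1, g_2)$, which is now provable under the persistence modality $\always$ because the only resource it needs, the invariant $I_x$, is persistent. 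Assuming related arguments $\interp{(\tunit \to \tunit)^\low}(f_1, f_2)$ and $\beta$-reducing $g_i\,f_i$ to $x_i \gets 1; f_i(); \deref x_i$, the proof proceeds in three steps. (1) For $x_i \gets 1$ I open $I_x$ with \ruleref{dwp-atomic-inv} and use \ruleref{dwp-atomic-wp}/\ruleref{awp-store} to set both references to $1$; if the invariant was in the $\mathsf{High}$ branch I apply \ruleref{state-change} to move to $\mathsf{Low}$, and in either case I keep a duplicated copy of the persistent witness $\stateToken{\mathsf{Low}}$ before closing the invariant in its $\mathsf{Low}$ branch. (2) For $f_i()$ I apply the hypothesis $\interp{(\tunit \to \tunit)^\low}(f_1, f_2)$ to the related arguments $\unittt$, obtaining $\dwpre{f_1\,()}{f_2\,()}{\interp{\tunit}}$, so both calls return $\unittt$; the invariant is closed across this call and $\stateToken{\mathsf{Low}}$, being persistent, survives it. (3) For $\deref x_i$ I reopen $I_x$ and use $\stateToken{\mathsf{Low}}$ with \ruleref{state-agree} to rule out the $\mathsf{High}$ branch, so both references still hold $1$; \ruleref{awp-load} then returns $1$ on both sides, and $\interp{\tint^\low}(1, 1)$ holds since the two values are equal.

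The main obstacle is bridging steps (2) and (3). Because $f$ is an arbitrary related function, the invariant must be closed while $f_i()$ runs, so the syntactic knowledge that the references hold equal low values is lost and can only be recovered afterwards --- which is exactly what the persistent witness $\stateToken{\mathsf{Low}}$ provides. This is where monotonicity is essential: had $\mathit{awk}$ written a high value after $1$ (as in the insecure $\mathit{awk}_{\mathit{bad}}$ of \Cref{sec:awkward_example}), the $\mathsf{Low}$ state would not be stable, no such witness could be retained, and step (3) would fail. Re-entrant calls to $g$ during $f_i()$ are harmless: each merely re-executes $x_i \gets 1$, which preserves the $\mathsf{Low}$ branch, so reopening $I_x$ at step (3) still finds equal values.
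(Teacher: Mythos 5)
Your proposal is correct and takes essentially the same route as the paper: the paper's proof likewise unfolds the semantic typing judgment and establishes a monotone two-state protocol (a cut-down version of the value-dependent classification protocol, with a duplicable token for the final state) around the two local references, deferring the details to the Coq mechanization; your observation that the invariant must be allocated \emph{before} applying the value rule, so that the closure relation can be proved under the persistence modality from the (persistent) invariant alone, is exactly the key step that makes this work. Incidentally, where the paper's prose states the postcondition as $v_1 = v_2 = 0$, your step (3) correctly yields $1$ (the value written by the closure), which appears to be a typo in the paper.
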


\begin{proof}
The proposition boils down to showing that for any $i_1, i_2 \in \mathbb{Z}$ and $f_1, f_2$ with $\interp{(\tunit \to \tunit)^\low}(f_1, f_2)$, we have $\dwpre{\mathit{awk}\;i_1\;f_1}{\mathit{awk}\;i_2\;f_2}{\Ret \val_1\, \val_2. \val_1 = \val_2 = 0}$.
We verify this by establishing a monotone protocol similar to the one used in the proof of value-dependent classification in \Cref{sec:logic:example}.
The full proof can be found in the Coq mechanization.
\end{proof}

After establishing the semantic typing for, \eg $\mathit{prog}$ we can use it in any context where a function of the type $\tref{\tint^\low} \to \tint^\high \to \tunit \times \tunit$ is expected.
For example:
\begin{align*}
\typed{h: \tint^\high, f : \tref{\tint^\low} \to \tint^\high \to \tunit \times \tunit \\}{\Let x = \Ref(0) in \Fork{f\ x\ h}; \deref x}{\tint^\low}
\end{align*}
Using the fundamental property (\Cref{thm:fundamental}) we obtain a semantic typing judgment for the above program.
Using \Cref{prop:dwp_prog2} we establish that if we substitute $\mathit{prog}$ for $f$, the resulting program will still be semantically typed, and thus secure by the soundness theorem (\Cref{thm:adequacy}).

The same methodology can be used to assign the types to the safe array operations from \Cref{sec:examples:manual} via manual proof, and compose them with the type checked set data structure from \Cref{sec:set_example}.
The proof can be found in the Coq mechanization.


\section{Modular separation logic specifications}
\label{sec:modular_spec}

Types provide a convenient way to specify program modules, but are not always strong enough to enable the verification of sophisticated clients.
This is particularly relevant if the specification of a program module is to be used in a manual proof or relies on function correctness.
We show that in addition to specifications through types, \thelogic can also be used to prove modular specification in separation logic.
We demonstrate this approach on dynamically created locks (\Cref{sec:locks})
and dynamically classified references (\Cref{sec:modular_value_dep_spec}).

\subsection{Locks}
\label{sec:locks}
The $\HeapLang$ language we consider does not provide locks as primitive constructs, but provides the low-level compare-and-set ($\<CAS>$) operation with which different locking mechanisms can be implemented.
\Cref{fig:lock_rules} displays the implementation and specification of a spin lock.
The specification makes use of a relational generalization of the common \emph{lock predicates} in separation logic~\cite{Biering:2007:BHS:1275497.1275499,CAP,icap}.
The predicate $\isLock{\lkvar_1}{\lkvar_2}{\propC}$ expresses that the pair of locks $\lkvar_1$ and $\lkvar_2$ protect the resources $\propC$, and the predicate $\locked{\lkvar_1}{\lkvar_2}$ expresses that the pair of locks is in acquired state.

\begin{figure}
\paragraph{Implementation of a spin lock}
\begin{align*}
\<let>{} \newlock\ \unittt ={}& \Ref(\<false>) \\
\<let> \<rec>{} \acquire\ \lkvar ={}&
  \<if> \<CAS>(\lkvar,\<false>,\<true>)\ \<then> \unittt \\[-0.2em]
  & \Else \acquire\ \lkvar\\
\<let>{} \release\ \lkvar ={}& \lkvar \gets \<false>
\end{align*}
\paragraph{Modular separation logic specification of locks}
\begin{mathpar}
\inferH{newlock-spec}
{R}
{\dwpre{\newlock\ \unittt}{\newlock\ \unittt}{\Ret \lkvar_1\, \lkvar_2. \isLock{\lkvar_1}{\lkvar_2}{R}}}
\and
\inferH{islock-dup}
{\isLock{\lkvar_1}{\lkvar_2}{\propC}}
{\isLock{\lkvar_1}{\lkvar_2}{\propC} \ast \isLock{\lkvar_1}{\lkvar_2}{R}}
\and
\inferH{acquire-spec}
{\isLock{\lkvar_1}{\lkvar_2}{\propC}}
{\dwpre{\acquire\ \lkvar_1}{\acquire\ \lkvar_2}{\propC \ast \locked{\lkvar_1}{\lkvar_2}}}
\and
\inferH{release-spec}
{\isLock{\lkvar_1}{\lkvar_2}{\propC} \and \propC \and \locked{\lkvar_1}{\lkvar_2}}
{\dwpre{\release\ \lkvar_1}{\release\ \lkvar_2}{\TRUE}}
\end{mathpar}
\caption{Dynamically allocated locks in \thelogic.}
\label{fig:lock_rules}
\label{fig:spin_lock}
\end{figure}

To verify that the spin lock implementation conforms to the lock specification, we define the lock predicates using Iris's mechanism for invariants and user-defined ghost state.
The proof (and invariant) are generalizations of the ordinary proof (and invariant) for functional correctness in Iris.

The rules of our lock specification are similar to the rules in logics with locks as primitives constructs, such as~\cite{CovernMain,SecCSL}.
There are two notable exceptions.
First, in \loccit one needs to fix the set of locks and associated resources upfront,
whereas in \thelogic one can create locks dynamically and attach an arbitrary resource $\propC$ to each lock during the proof.
Second, since locks are not primitive constructs in \thelogic, the specification also applies to different lock implementations, \eg a ticket lock, as we have shown in the Coq mechanization.

\subsection{Dynamically classified references}
\label{sec:modular_value_dep_spec}

\begin{figure*}
\paragraph{Implementation of dynamically classified references}
\begin{align*}
\<let> \newRec\ \val ={}& \left \{
  \begin{array}{@{} l @{}}
  \mathit{data} = \Ref(\val);\\[-0.2em]
  \mathit{is\_classified} = \Ref(\<false>)
\end{array} \right\} &
  \<let> \classify\ r ={}& r.\mathit{is\_classified} \gets \<true>\\
\<let> \readRec\ r ={}& \deref r.\mathit{data} &
  \<let> \declassify\ r\ \val ={}&
    r.\mathit{data} \gets \val; r.\mathit{is\_classified} \gets \<false> \\
\<let> \storeRec\ r\ \val ={}& r.\mathit{data} \gets \val &
  \<let> \isClassified\ r ={}& \deref r.\mathit{is\_classified}
\end{align*}

\smallskip
\paragraph{Modular separation logic specification of dynamically classified references}

\begin{mathpar}
\inferH{new-vdep}
{\interp{\type \sqcup \lvlvar}(\val_1, \val_2)}
{\dwpre{\newRec\ \val_1}{\newRec\ \val_2}{\Ret r_1\, r_2.
  \valueDependent(\type, r_1, r_2) \ast
  \classification{\lvlvar}{1}}}
\\
\axiomH{valdep-dup}
{\valueDependent(\type, r_1, r_2) \proves \valueDependent(\type, r_1, r_2) \ast \valueDependent(\type, r_1, r_2) }
\ \
\axiomH{class-split}
{\classification{\lvlvar}{q_1} \ast \classification{\lvlvar}{q_2} \provesIff \classification{\lvlvar}{q_1 + q_2}}
\\
\inferH{read-safe}
{\valueDependent(\type, r_1, r_2)}
{\dwpre{\readRec\ r_1}{\readRec\ r_2}{\Ret \val_1\, \val_2. \interp{\type \sqcup \high}(\val_1, \val_2)}}
\and
\inferH{read-seq}
{\valueDependent(\type, r_1, r_2) \and 
\classification{\lvlvar}{q}
}
{\dwpre{\readRec\ r_1}{\readRec\ r_2}{\Ret \val_1\, \val_2. \interp{\type \sqcup \lvlvar}(\val_1, \val_2) \ast \classification{\lvlvar}{q}}}
\and
\inferH{store-safe}
{\valueDependent(\type, r_1, r_2) \and \interp{\type}(\val_1, \val_2)}
{\dwpre{\storeRec\ r_1\ \val_1}{\storeRec\ r_2\ \val_2}
       {\TRUE}}
\and
\inferH{store-seq}
{\valueDependent(\type, r_1, r_2) \and 
\classification{\lvlvar}{q} \and
\interp{\type \sqcup \lvlvar}(\val_1, \val_2)}
{\dwpre{\storeRec\ r_1\ \val_1}{\storeRec\ r_2\ \val_2}
       {\classification{\lvlvar}{q}}}
\and
\inferH{classify-seq}
{\valueDependent(\type, r_1, r_2) \and 
\classification{\lvlvar}{1}}
{\dwpre{\classify\ r_1}{\classify\ r_2}
       {\classification{\high}{1}}}
\and
\inferH{declassify-seq}
{\valueDependent(\type, r_1, r_2) \and 
\classification{\lvlvar}{1} \and
\interp{\type}(\val_1, \val_2)}
{\dwpre{\declassify\ r_1\ \val_1}{\declassify\ r_2\ \val_2}
       {\classification{\low}{1}}}
\and
\inferH{get-classified-seq}
{\valueDependent(\type, r_1, r_2) \and 
\classification{\lvlvar}{q}}
{\dwpre{\isClassified\ r_1}{\isClassified\ r_2}
       {\Ret b_1\, b_2. (b_1 = b_2) \ast \classification{\lvlvar}{q} \ast ((b_1 = \<false>) \to (\lvlvar = \low))}}
\end{mathpar}

\smallskip
\paragraph{The transition system used for the proof}

\begin{center}
\begin{tikzpicture}
\node[state] (classified)
  {$\Classified$ ($b = \<false>$, $\lvlvar = \high$)};
\node[state,right=3em of classified] (intermediate)
  {$\Intermediate$ ($b = \<true>$, $\lvlvar = \low$)};
\node[state,right=3em of intermediate] (declassified)
  {$\Declassified$ ($b = \<true>$, $\lvlvar = \low$)};
\draw  (classified)    edge               (intermediate)
       (intermediate)  edge               (declassified)
       (declassified)  edge[bend right=8] (classified);
;
\end{tikzpicture}
\end{center}
\caption{Dynamically classified references in \thelogic.}
\label{fig:value_dep_derived_specs}
\label{fig:value_dep_lib}
\label{fig:value_dep_sts}
\end{figure*}

We consider a program module
that encapsulates and generalizes dynamically classified references\footnote{In this context declassification refers to changing the dynamic classification of the reference.
It is thus unrelated to static declassification policies~\cite{Sabelfeld:Sands:Declassification}, and the $\declassify$ function is unrelated to the eponymous function from~\cite{Sabelfeld:Myers:DelimitedRelease}.} as used in \Cref{sec:example:classify}.
This program module generalizes to clients with multiple threads and different sharing models.
For example, clients in which multiple threads read and write to the dynamically classified reference, or in which the data gets classified again.
The Coq mechanization contains such an example.
The implementation and specification\footnote{The specification in \Cref{fig:value_dep_lib} is derived from a more general HOCAP-style logically atomic specifications~\cite{HOCAP}, which can be found in \Cref{sec:appendix:modular_specs} and the Coq mechanization.} of the module for dynamically classified references is shown in \Cref{fig:value_dep_lib}.

The main ingredient of the specification is the representation predicate $\valueDependent(\type, r_1, r_2)$,
which expresses that the dynamically classified references $r_1$ and $r_2$ contain related data of type $\type$ at all times.
Since $\valueDependent(\type, r_1, r_2)$ expresses mere knowledge instead of ownership, it is duplicable (\ruleref{valdep-dup}).
With the representation predicate at hand we can formulate weak specifications for some operations.
For instance, the rule \ruleref{read-safe} over-approximates the sensitivity-level of the values returned by the $\readRec$ operation, and dually, the rule \ruleref{store-safe} under-approximates the sensitivity-level of the values stored using the $\storeRec$ operation.
Of course, at times we want to track the precise sensitivity-level.
For that we use a \emph{fractional token} $\classification{\lvlvar}{q}$ with $q \in (0,1]_{\mathbb Q}$.
This token is reminiscent of fractional permissions in separation logic.
The proof rules for $\declassify$ and $\classify$ (\ruleref{declassify-seq} and \ruleref{classify-seq}) require the full fraction ($q = 1$) since they change the classification.
The precise rules for $\readRec$ and $\storeRec$ (\ruleref{read-seq} and \ruleref{store-seq}) do not change the classification, and thus require an arbitrary fraction.
The token is splittable according to \ruleref{class-split} so it can be shared between threads.

Since the rules for $\declassify$ and $\classify$ require a full fraction ($q = 1$), they do not allow for fine-grained sharing\footnote{We can still achieve sharing by storing the token $\classification{\lvlvar}{1}$ in a lock, as outlined in \Cref{sec:locks}.}, \ie they cannot be used to verify a program that runs $\declassify$ in parallel with $\classify$.
It is good that this is impossible---running these operations in parallel results in a race-condition, making it impossible to know what the final classification would be.
However, it \emph{is} possible to verify a program that runs $\declassify$ in parallel with $\readRec$ or $\storeRec$ (using precise rules for these two operations) by sharing the token via an invariant.
To access such a shared token one has to use the more general HOCAP-style logically atomic specifications found in \Cref{sec:appendix:modular_specs} and the Coq mechanization.

\paragraph*{Proof}
In order to verify the implementation, we follow the usual approach of defining the representation predicate $\valueDependent(\type, r_1, r_2)$ and token $\classification{\lvlvar}{q}$ using Iris's invariant and protocol mechanism.
The invariant expresses that, at all times, the fields $\mathit{is\_classified}$ of both records contain the same Boolean value $b$, and that the data in the records are related by $\interp{\type \sqcup \lvlvar}$.
The relation between the Boolean values $b$ and the security label $\lvlvar$, and the way it evolves, are expressed using a protocol visualized as the transition system in \Cref{fig:value_dep_sts}.


\section{Soundness}
\label{sec:soundness}

To prove soundness of \thelogic (\Cref{thm:adequacy}), we give a model of double weakest preconditions in Iris (\Cref{sec:wp_model}), and then construct a bisimulation out of this model (\Cref{sec:bisim}).
We only give a high-level overview, the details are in \Cref{appendix:soundness}.

\subsection{Model of double weakest preconditions}
\label{sec:wp_model}

The model of the Iris logic~\cite{iris3,irisJFP} consists of three layers:

\begin{itemize}
\item The Iris base logic, which contains the standard separation logic connectives (\eg $*$ and $\wand$), modalities (\eg $\later$, $\always$), and the machinery for user-defined ghost state.
\item The invariant mechanism, which is built as a library on top of the Iris base logic.
\item The Iris program logic, which is built as a library on top of the Iris base logic and invariant mechanism.
  It provides weakest preconditions for proving safety and functional correctness of concurrent programs.
\end{itemize}

We reuse the first two layers of Iris (the base logic and the invariant mechanism), on top of which we model our new notion of double weakest preconditions.
This model is inspired by the model of ordinary (unary) weakest preconditions in Iris and the \emph{product program} construction~\cite{ProductPrograms}.
Intuitively, $\dwpre{\expr_1}{\expr_2}{\pred}$ captures that the expressions $\expr_1$ and $\expr_2$ are executed in lock-step.
This is done by case analysis:

\begin{itemize}
\item Either, both expressions $\expr_1$ and $\expr_2$ are values that are related by the postcondition $\pred$.
\item Otherwise, both expressions $\expr_1$ and $\expr_2$ are reducible, and for any reductions $(\expr_1, \stateS_1) \step (\expr'_1, \stateS'_1)$ and $(\expr_2, \stateS_2) \step (\expr'_2, \stateS'_2)$, the expressions $\expr'_1$ and $\expr'_2$ are still related by $\dwp$.
  If $\expr_1$ and $\expr_2$ fork off threads $\vec{\expr'_1}$ and $\vec{\expr'_2}$, then
  all of the forked-off threads are related pairwise by $\dwp$.
\end{itemize}

\subsection{Constructing a bisimulation}
\label{sec:bisim}

The main challenge of constructing a strong low-bisimulation lies in connecting double weakest preconditions, at the level of separation logic, with strong-low bisimulations, at the meta level.
The construction is done as follows:
\begin{enumerate}
\item We define a relation $\RR$ that ``lifts'' double weakest preconditions out of the \thelogic logic into the meta-level.
\item We then show that the relation $\RR$ satisfies a number of bisimulation-like properties.
\item The relation $\RR$ is not a bisimulation because it is not transitive.
  To fix this, we take its transitive closure $\RRR$. 
\end{enumerate}
Finally, we show that the $\dwp$ predicate is sound \wrt the relation $\RR$: if $I_{\Llocset} \vdash \dwpre{\expr}{\exprB}{\Ret \val_1\,\val_2. \val_1 = \val_2}$ can be derived in \thelogic, then $(\expr, \stateS_1)\RR(\exprB, \stateS_2)$ for $\stateS_1 \lowEquiv \stateS_2$.


\section{Mechanization in Coq}
\label{sec:formalization}

We have mechanized the definition of \thelogic, the type system, the soundness proof,
and all examples and derived constructions in the paper and the appendix in Coq.
The mechanization has been built on top of the mechanization of Iris~\cite{iris2,iris3,irisJFP}, which readily provides the Iris base logic, the invariant mechanism, and the $\HeapLang$ language.

To carry out the mechanization effectively, we have made extensive use of the tactic language MoSeL (formerly Iris Proof Mode) for separation logic in Coq~\cite{irisIPM,MoSeL}.
Using MoSeL we were able to carry out in Coq the typical kind of reasoning steps one would do on paper.
This was essential to mechanize the \thelogic logic (1818 line of Coq code), the type system (1355 lines), and all the examples (3223 lines).


\section{Related work}
\label{sec:related_work}

\subsection{Security based on strong low-bisimulations}
\label{sec:related_work_bisim}

The security condition we use, a strong low-bisimulation due to Sabelfeld and Sands~\cite{prob-ni}, has been studied in a variety of related work.
In \loccit the notion of a strong low-bisimulation is applied to a first-order stateful language with concurrency.
It is also shown that this notion implies a scheduler-independent bisimulation known as $\rho$-specific probabilistic bisimulation. 
Sabelfeld and Sands presented both strong low-bisimulation on thread pools and configurations.
We use the bisimulation relation on configurations because it allows for a flow-sensitive analysis and readily supports dynamic allocation.

Strong low-bisimulations are highly compositional: if a thread $\expr$ is secure \wrt a strong low-bisimulation, then the composition of $\expr$ with \emph{any other} thread is secure.
Unfortunately, this property makes it non-trivial to adapt strong low-bisimulations for analyses that are flow-sensitive in thread composition.
We work around this issue by composing the components at the level of the logic (as double weakest preconditions), and not at the level of the bisimulations, despite the fact that we use strong low-bisimulations as an auxiliary notion in our soundness proof.
By performing the composition at the level of the logic, we can use Iris invariants and modular specifications to put restrictions onto which threads can be composed.

Another way of enabling flow-sensitive analysis was developed by Mantel \etal~\cite{assumptionsguarantees}, who relaxed the notion of a strong low-bisimulation to a \emph{strong low-bisimulation modulo modes}.
Their approach enables rely-guarantee style reasoning at the level of the bisimulation.
Notably, using the notion of strong low-bisimulations modulo modes one can specify that no other threads can read or write to a certain location.

\newcommand{\COVERN}{Covern\xspace}
Based on the notion of strong low-bisimulations modulo modes, the \COVERN project \cite{CovernValueDep,CovernCompiler,CovernMain} developed a series of logics for rely/guarantee reasoning.
Notably, Murray \etal~\cite{CovernMain} presented the first fully mechanized program logic for non-interference of concurrent programs with shared memory, which is also called \COVERN.
While \COVERN is not a separation logic, it has been extended to allow for flexible reasoning about non-interference in presence of value-dependent classifications~\cite{CovernValueDep}.
In terms of the object language, \COVERN does not support fine-grained concurrency, arrays, or dynamically allocated references.
Since \COVERN does not support fine-grained concurrency, locks are modeled as primitives in the language and logic, while they are derived constructs in our work.
As a result of that, \COVERN's notion of strong low-bisimulations is tied to the operational semantics of locks, \ie it is considered \emph{modulo} the variables that are held by locks.
The set of locks, and the variables they protect, has to be provided statically.
Hence their approach does not immediately generalize to support dynamically allocated locks, nor to reason about locks that protect other resources than permissions to write to or read from variables.
Value-dependent classifications are also primitive in \COVERN~\cite{CovernValueDep}, while they are derived constructs in our work.
\COVERN has two separate primitive rules for assignment to ``normal'' variables and for assignment to ``control'' variables (\ie variables that signify the classification levels).


\subsection{Program logics for non-interference}
Early work by Beringer and Hofmann \cite{BeringerHofmannLogic} established a connection between Hoare logic and non-interference.
They did so for a first-order sequential language with a simple non-interference condition.
Non-interference was encoded through self-composition and renaming, making sure that both parts of the composed program operate on different parts of the heap (something that one gets by construction in separation logic).
Notably, they proved the non-interference property of two type systems by 
constructing models of the type systems in their Hoare logic.
They also showed how to extend their approach to object-oriented type systems.

\subsection{Separation logics for non-interference}
Karbyshev~\etal~\cite{comp-ni-sep} devised a compositional type-and-effect system based on separation logic to prove non-interference of concurrent programs with channels.
Their system is sound \wrt termination-insensitive non-interference allowing for races on low-sensitivity locations.
They consider security for arbitrary (deterministic) schedulers, and allow for a \emph{rescheduling} operation in the programming language to prevent scheduler tainting.
To achieve that, their logical rule for rescheduling treats the scheduler as a splittable separation logic resource, allowing one to share it between threads.
In terms of the object language, they consider a first-order language without dynamic memory allocation, and the concurrency primitives are based on channels with send and receive operations rather than our low-level fine-grained concurrency model.
They do not provide a logic for modular reasoning about program modules.

The recently proposed separation logic SecCSL~\cite{SecCSL} enables reasoning about value-dependent information flow control policies through a relational interpretation of separation logic.
One of the main advantages of the SecCSL approach is its amenability to automation.
However, to achieve that, they restrict to a first-order separation logic with restricted language features, \ie a first-order language with first-order references, and a coarse-grained synchronization mechanism.
SecCSL does not support dynamically allocated references out of the box.
However, we believe that it can be extended to support dynamic allocation, as long as the semantics for allocation are deterministic and do not depend on the global heap.

The security condition in SecCSL~\cite{SecCSL} is non-standard, and is geared to providing meaning to the intermediate Hoare triples.
Because of that, their formulation of non-interference is closely intertwined with the semantics of the logic.

Costanzo and Shao~\cite{sepLogic:declass:2014} devised a separation logic for proving non-interference of first-order sequential programs.
One of the novelties of their system is the support for declassification in the form of \emph{delimited release}~\cite{Sabelfeld:Myers:DelimitedRelease}.
While we do not study declassification policies in this paper, we believe that the approach of Costanzo and Shao can be adapted to our setting, provided that we are willing to relax the notion of a strong low-bisimulation.

\subsection{Type systems for non-interference}
As discussed in the introduction (\Cref{sec:intro}), a lot of work on non-interference in the programming languages area has focused on type-system based approaches.
Such approaches are amendable to high degrees of automation, but lack the ability to reason about functional correctness.
Due to an abundance of prior work on in this area, we restrict to directly related work.

Pottier and Simonet developed Flow Caml \cite{flowcaml}, a type system for termination-insensitive non-interference for sequential higher-order language in the spirit of Caml.
Soundness \wrt non-interference is proven with the \emph{product programs} technique.
This kind of self-composition was an inspiration for our model of double weakest preconditions, although we avoid self-composition of programs at the syntactic level.

Terauchi \cite{Terauchi08} devised a capabilities-based type system for \emph{observational determinism} \cite{ObsDet}.
Observational determinism is a formulation of non-interference for concurrent programs that is substantially different from the strong low-bisimulation considered in this paper.
In particular, under observational determinism, no races on low-sensitivity locations are allowed, ruling out \eg the $\<rand>$ function from \Cref{sec:rand_example}.

\subsection{Logical relation models}

The technique of logical relations is widely used for proving the soundness of type systems and logics.
The work on step-indexing~\cite{Ahmed:StepIndexed,Appel:ModalModel} made it possible to scale logical relations to languages with higher-order references and recursive types.
Notably, Rajani and Garg \cite{RajaniGargModel} describe a step-indexed Kripke-style model for two information flow aware type systems for a sequential language with higher-order references.
While they do not consider concurrency and their notion of non-interference is different from ours (their notion is termination- and progress-insensitive), their model is similar in spirit.
However, we make use of the ``logical'' approach to step-indexing~\cite{lslr} in Iris to avoid explicit step-indexes in definitions and proofs.

The relational model of our type system is directly inspired by a line of work on interpretation of type systems and logical relations in Iris \cite{irisIPM,irisEffects,RustBelt,irisRunSt,reloc}, but this previous work focused on reasoning about safety and contextual equivalence of programs, while we target non-interference.
For that purpose we developed double weakest preconditions.

The idea of using logical relations to reason about the combination of typed and manually verified code has been used before in the context of Iris.
Notably, Jung \etal~\cite{RustBelt} use it to reason about unsafe code in Rust, and Krogh-Jespersen \etal~\cite{irisEffects} use it in the context of type-and-effect systems.


\section{Conclusions and future work}

We have presented \thelogic---the first separation logic for non-interference that combines type checking and manual proof.
It supports fine-grained concurrency, higher-order functions, and dynamic (higher-order) references.
The key feature of \thelogic is its novel connective for double weakest preconditions, which in combination with Iris-style invariants, allows for compositional reasoning.
We have proved soundness of \thelogic with respect to a standard notion of security.

In future work we want to develop a more expressive type system.
To develop such a type system, we want to transfer back reasoning principles from \thelogic into constructs that can be type checked automatically.
Moreover, we would like to study declassification in the sense of delimited information release and static declassification policies~\cite{Sabelfeld:Sands:Declassification,Sabelfeld:Myers:DelimitedRelease,Banerjee:2007:declass,sepLogic:declass:2014}.


\bibliography{ni,iris}

\appendices
\crefalias{section}{appendix}
\crefalias{subsection}{appendix}
\section{Type system}
\label{appendix:types}

\subsection{Typing rules}

\begin{figure*}
\paragraph{Subtyping rules}\hfill

\vspace{-2em}
\begin{mathpar}
\infer{}
{\type <: \type}
\and
\infer{\type_1 <: \type_2 \and \type_2 <: \type_3}
{\type_1 <: \type_3}
\vspace{-0.5em} \\
\infer{\lvlvar_1 \sqsubseteq \lvlvar_2}
{\tint^{\lvlvar_1} <: \tint^{\lvlvar_2}}
\and
\infer{\lvlvar_1 \sqsubseteq \lvlvar_2}
{\tbool^{\lvlvar_1} <: \tbool^{\lvlvar_2}}
\and
\infer
{\lvlvar_1 \sqsubseteq \lvlvar_2
\and \type'_1 <: \type_1
\and \type_2 <: \type'_2}
{(\type_1 \to \type_2)^{\lvlvar_1} <: (\type'_1 \to \type'_2)^{\lvlvar_2}}
\and
\infer
{\type_1 <: \type'_1
\and \type_2 <: \type'_2}
{\type_1 \times \type_2 <: \type'_1 \times \type'_2}
\end{mathpar}

\medskip
\paragraph{Typing rules}
\begin{mathpar}
\infer
  {\type <: \type'
   \and \typed{\Gamma{}}{\expr}{\type{}}}
  {\typed{\Gamma{}}{\expr}{\type'{}}}
\and
\infer
  {\Gamma(x) = \type}
  {\typed{\Gamma{}}{x}{\type}}
\and
\infer
  {\loc \in \Llocset}
  {\typed{\Gamma}{\loc}{\tref{\tint^\low}}}
\and
\infer
  {}{\typed{\Gamma}{\unittt}{\tunit}}
\and
\infer
  {i \in \mathbb{Z}}
  {\typed{\Gamma}{i}{\tint^\lvlvar}}
\and
\infer
  {b \in \mathbb{B}}
  {\typed{\Gamma}{b}{\tbool^\lvlvar}}
\and
\infer
  {\typed{\Gamma}{\expr}{\tint^\lvlvar} \and \typed{\Gamma}{\exprB}{\tint^\lvlvarB}}
  {\typed{\Gamma}{\expr + \exprB}{\tint^{\lvlvar \sqcup\lvlvarB}}}
  \and
  \infer
  {\typed{f : (\type \to \type')^\lvlvar, x : \type, \Gamma}
    {\expr}{\type' \sqcup \lvlvar}}
  {\typed{\Gamma}
    {\Rec f x = \expr}{(\type \to \type')^\lvlvar}}
\and
\infer
  {\typed{\Gamma}{\expr}{(\tau \to \tau')^\lvlvar} \and
  \typed{\Gamma}{\exprB}{\tau}
  }
  {\typed{\Gamma}{\expr\; \exprB}{\tau' \sqcup \lvlvar}}
\and
\infer
  {\typed{\Gamma}{\expr}{\tbool^\low}
   \and \typed{\Gamma}{\expr_1}{\type}
   \and \typed{\Gamma}{\expr_2}{\type}}
  {\typed{\Gamma}{\If \expr then \expr_1 \Else \expr_2}{\type}}
\and
\infer
  {\mbox{$\begin{array}[t]{@{} c}
   \typed{\Gamma}{\expr}{\tbool^\high} \quad
   \typed{\Gamma}{v}{\type} \quad \typed{\Gamma}{w}{\type} \\[-0.2em]
   \mbox{$v$, $w$ are values or variables in $\Gamma$} \quad
   \mbox{$\type$ is \flatType}
   \end{array}$}}
  {\typed{\Gamma}{{\If \expr then v \Else w}}{\type}}
\and
\infer
  {\typed{\Gamma}{\expr}{\type}}
  {\typed{\Gamma}{\Fork{\expr}}{\tunit}}
\and
\infer
  {\typed{\Gamma}{\expr}{\type}}
  {\typed{\Gamma}{\Ref(\expr)}{\tref{\type}}}
\and
\infer
  {\typed{\Gamma}{\expr}{\tref{\type}}}
  {\typed{\Gamma}{\deref \expr}{\type}}
\and
\infer
  {\typed{\Gamma}{\expr_1}{\tref{\type}}
   \and \typed{\Gamma}{\expr_2}{\type}}
  {\typed{\Gamma}{\expr_1 \gets \expr_2}{\tunit}}
\and
\infer
  {\typed{\Gamma{}}{\expr_1}{\tref{\tint^\lvlvar}}
   \and \typed{\Gamma{}}{\expr_2}{\tint^\lvlvar}}
  {\typed{\Gamma{}}{\<FAA>(\expr_1, \expr_2)}{\tint^\lvlvar}}
\end{mathpar}
\caption{Typing rules of the \thelogic type system.}
\label{fig:type_system}
\end{figure*}
\begin{figure*}
\begin{mathparpagebreakable}
\inferH{interp-sub}
{\type_1 <: \type_2 \and
\interp{\type_1}(\val_1, \val_2)}
{\interp{\type_2}(\val_1, \val_2)}
\and
\inferH{logrel-sub}
{\type_1 <: \type_2 \and
  \dwpre{\expr_1}{\expr_2}{\interp{\type_1}}}
{\dwpre{\expr_1}{\expr_2}{\interp{\type_2}}}
\and
\inferH{logrel-int-low}
{i \in \mathbb{Z}}
{\dwpre{i}{i}{\interp{\tint^\lvlvar}}}
\and
\inferH{logrel-int}
{i_1, i_2 \in \mathbb{Z} \and \lvlvar \not\sqsubseteq \low}
{\dwpre{i_1}{i_2}{\interp{\tint^\lvlvar}}}
\and
\inferH{logrel-bool-low}
{b \in \mathbb{B}}
{\dwpre{b}{b}{\interp{\tbool^\lvlvar}}}
\and
\inferH{logrel-bool}
{b_1, b_2 \in \mathbb{B} \and \lvlvar \not\sqsubseteq \low}
{\dwpre{b_1}{b_2}{\interp{\tbool^\lvlvar}}}
\and
\inferH{logrel-binop}
{\dwpre{\expr_1}{\expr_2}{\interp{\tint^{\lvlvar}}}
\and
\dwpre{\exprB_1}{\exprB_2}{\interp{\tint^{\lvlvarB}}}}
{\dwpre{\expr_1 + \exprB_1}{\expr_2 + \exprB_2}
  {\interp{\tint^{\lvlvar \sqcup \lvlvarB}}}}
\and
\inferH{logrel-rec}
{\always \All f_1\, f_2\, \val_1\, \val_2.
  \interp{(\type_1 \to \type_2)^\lvlvar}(f_1, f_2) \ast \interp{\type_1}(\val_1, \val_2) \wand
  \dwpre{\subst{\subst{\expr_1}{x}{\val_1}}{f}{f_1}}%
        {\subst{\subst{\expr_2}{x}{\val_2}}{f}{f_2}}%
        {\interp{\type_2 \sqcup \lvlvar}}
}
{\dwpre{(\Rec f x := \expr_1)}{(\Rec f x := \expr_2)}{\interp{(\type_1 \to \type_2)^\lvlvar}}}
\and
\inferH{logrel-app}
{\dwpre{\expr_1}{\expr_2}{\interp{(\type_1 \to \type_2)^\lvlvar}}
\and
\dwpre{\exprB_1}{\exprB_2}{\interp{\type_1}}}
{\dwpre{\expr_1\ \exprB_1}{\expr_2\ \exprB_2}{\interp{\type_2 \sqcup \lvlvar}}}
\and
\inferH{logrel-if}
{\dwpre{\expr_1}{\expr_2}{\interp{\tbool^\lvlvar}}\and
\dwpre{t_1}{t_2}{\pred} \wedge
\dwpre{u_1}{u_2}{\pred}
\wedge (\lvlvar \not\sqsubseteq \low \to
   (\dwpre{u_1}{t_2}{\pred} \wedge
   \dwpre{t_1}{u_2}{\pred}))}
{
  \dwpre{(\If \expr_1 then t_1 \Else u_1)}
  {(\If \expr_2 then t_2 \Else u_2)}{\pred}
}
\and
\inferhref{logrel-if-low}{logrel-if-low'}
{\dwpre{\expr_1}{\expr_2}{\interp{\tbool^\low}}
\and \dwpre{t_1}{t_2}{\pred}
\and \dwpre{u_1}{u_2}{\pred}}
{\dwpre{\If \expr_1 then t_1 \Else u_1}{\If \expr_2 then t_2 \Else u_2}{\pred}}
\and
\inferhref{logrel-if-flat}{logrel-if-flat'}
{\dwpre{\expr_1}{\expr_2}{\interp{\tbool^\high}}
\and \dwpre{v_1}{v_2}{\interp{\type}}
\and \dwpre{w_1}{w_2}{\interp{\type}}
\and \mbox{$\type$ is \flatType}}
{\dwpre{\If \expr_1 then v_1 \Else w_1}{\If \expr_2 then v_2 \Else w_2}{\interp{\type}}}
\and
\inferH{logrel-fork}
{\dwpre{\expr_1}{\expr_2}{\pred}}
{\dwpre{\Fork{\expr_1}}{\Fork{\expr_2}}{\interp{\tunit}}}
\and
\inferH{logrel-alloc}
{\dwpre{\expr_1}{\expr_2}{\interp{\type}}}
{\dwpre{\Ref(\expr_1)}{\Ref(\expr_2)}{\interp{\tref{\type}}}}
\and
\inferH{logrel-load}
{\dwpre{\expr_1}{\expr_2}{\interp{\tref{\type}}}}
{\dwpre{\deref \expr_1}{\deref \expr_2}{\interp{\type}}}
\and
\inferhref{logrel-store}{logrel-store'}
{\dwpre{\expr_1}{\expr_2}{\interp{\tref{\type}}}
\and
\dwpre{t_1}{t_2}{\interp{\type}}}
{\dwpre{(\expr_1 \gets t_1)}
  {(\expr_2 \gets t_2)}
  {\interp{\tunit}}}
\and
\inferH{logrel-faa}
{\dwpre{\expr_1}{\expr_2}{\interp{\tref{\tint^\lvlvar}}}
\and
\dwpre{t_1}{t_2}{\interp{\tint^\lvlvar}}
}
{\dwpre{\<FAA>(\expr_1, t_1)}{\<FAA>(\expr_2, t_2)}{\interp{\tint^\lvlvar}}}
\end{mathparpagebreakable}
\caption{Compatibility rules of \thelogic type system.}
\label{fig:compat_rules_appendix}
\end{figure*}
The subtyping and typing rules can be found in \Cref{fig:type_system};
the compatibility rules can be found in \Cref{fig:compat_rules_appendix}.
The \emph{level stamping} function is defined as:
\begin{align*}
\tunit \sqcup \lvlvarB \eqdef{}& \tunit &
  (\type \times \type') \sqcup \lvlvarB \eqdef{}& (\type \sqcup \lvlvarB) \times (\type' \sqcup \lvlvarB) \\
\tint^\lvlvar \sqcup \lvlvarB \eqdef{}& \tint^{\lvlvar \sqcup \lvlvarB} &
  (\tref{\type}) \sqcup \lvlvarB \eqdef{}&\tref{\type} \\
\tbool^\lvlvar \sqcup \lvlvarB \eqdef{}& \tbool^{\lvlvar \sqcup \lvlvarB} &
  (\type \to \type')^\lvlvar \sqcup\lvlvarB \eqdef{}& (\type \to \type')^{\lvlvar \sqcup \lvlvarB}
\end{align*}
\emph{Flat types} are defined inductively as:
\begin{mathpar}
\axiom
{\tunit \mbox{ is \flatType}}
\and
\axiom
{\tint^\high \mbox{ is \flatType}}
\and
\axiom
{\tbool^\high \mbox{ is \flatType}}
\and
\infer
{\type_1 \mbox{ is \flatType}
\and \type_2 \mbox{ is \flatType}}
{\type_1 \times \type_2 \mbox{ is \flatType}}
\end{mathpar}

\subsection{Sensitivity labels on references and aliasing}
\label{sec:discussion:ref_labels}

Most type systems for non-interference for languages with (higher-order) references annotate reference types with sensitivity labels, and annotate the typing judgment with a \emph{program counter} label \cite{flowcaml,Terauchi08,RajaniGargModel,ZdancewicThesis}.
These annotations are used to prevent leaks via aliasing, while allowing more programs to be typed.
Our type system (\Cref{sec:logrel}) does not have such annotations because some programs that are typeable using such annotations are not secure \wrt a termination-sensitive notion of non-interference (\eg strong low-bisimulation).
For example, termination-insensitive type systems usually accept the following program as secure: 
\[
  (\If h then f \Else g)\; \unittt
\]
Here, $h$ is a high-sensitivity Boolean, and $f$ and $g$ are functions of type $(\tunit \to \tunit)^\low$.
Under a termination-sensitive notion of security, the program is not secure because $f$ and $g$ can examine different termination behavior.

Despite this, let us examine why exactly we do not need labels on reference types to prevent leaks via aliasing, and argue that our approach still allows for benign aliasing of references.
A classic example of an information leak via aliasing is:
\begin{align*}
\<let> p_1\;r\;s\;h ={}
& r \gets \<true>; s \gets \<true>;\\
& \Let x = (\If h then r \Else s) in\\
& x \gets \<false>; \deref r
\end{align*}
Both $r$ and $s$ contain low-sensitivity data, but by aliasing one or the other with $x$, the program leaks the high-sensitivity value $h$.
In previous approaches such leaks are avoided by tracking aliasing information through sensitivity labels on references.
The variable $x$ would be typed as $(\tref{\tint^\low})^\high$ because it was aliased in a high-sensitivity context (branching on $h$).
The consequent assignment $x \gets \<false>$ is then prevented by the type system since the label on the reference ($\high$) is not a below the label of the values that are stored in the reference ($\low$).

In \thelogic, the variable $x$ will not be typeable at all.
To see why that is the case, suppose we want to prove that the program is secure.
For this, we let $h_1$ and $h_2$ denote high-sensitivity inputs for two runs of the program, and $r_1, s_1$  (resp.~$r_2, s_2$) denote the low-sensitivity references arguments for the left-hand side program (resp.~right-hand side program).
Under these high-sensitivity inputs, we need to prove that the bodies of the let-expressions are indistinguishable, \ie
\[
  \dwpre{\If h_1 then r_1 \Else s_1}{\If h_2 then r_2 \Else s_2}{\interp{\tref{\tint^\low}}}
\]
Proving this proposition, would in particular require proving $\dwpre{r_1}{s_2}{\interp{\tref{\tint^\low}}}$, which is impossible in \thelogic.

If we remove the trailing assignment $x \gets \<false>$ the resulting program $p_2$ becomes trivially secure, and many termination-insensitive type systems accept it as such:
\begin{align*}
\<let> p_2\;r\;s\;h ={}
& r \gets \<true>; s \gets \<true>;\\
& \Let x = (\If h then r \Else s) in\\
& \deref r
\end{align*}
Our type system cannot be used to type check this example: as we have just explained, we cannot type the let-expression at all.
Despite this, we can fall back on the double weakest preconditions to verify the security of $p_2$, \ie we can prove:
\begin{align*}
& \interp{\tref{\tbool^\low}}(r_1, r_2) \ast
\interp{\tref{\tbool^\low}}(s_1, s_2) \ast{} \\
& \;\;\interp{\tbool^\high}(h_1, h_2) \vdash
   \dwpre{p_2\;r_1\;s_1\;h_1}{p_2\;r_2\;s_2\;h_2}{\interp{\tunit}}
\end{align*}
by symbolic execution.
Using our logic, we can perform a case distinction on the Boolean values $h_1$ and $h_2$,
which amounts to proving $\dwpre{p_2\;r_1\;s_1\;\<true>}{p_2\;r_2\;s_2\;\<true>}{\interp{\tunit}}$, $\dwpre{p_2\;r_1\;s_1\;\<true>}{p_2\;r_2\;s_2\;\<false>}{\interp{\tunit}}$, \etc{}
We solve all these goals by symbolic execution.
This example demonstrates the advantages of combining typing with manual proofs.

We believe that the restriction on the typing of the $\<let> x$-binding is not unreasonable in case of termination-sensitive and progress-sensitive security condition.
As we have mentioned, if we take termination and timing behavior into account, the liberal compositional reasoning that is enjoyed by termination-insensitive type systems is no longer sound.
In presence of higher-order functions and store, we can write the counterexample from the beginning of this section in the form of $p_2$ to obtain the program $p_3$ below:
\begin{align*}
\<let> p_3\;f\;g\;h ={}
& r \gets f; s \gets g;\\
& \Let x = (\If h then r \Else s) in\\
& (\deref x)\unittt
\end{align*}
The variable $x$ now aliases a reference to a function.
If $f$ and $g$ exhibit different termination behavior, then the value of $h$ can be observed by invoking $\deref x$.

\begin{figure*}[t!]
\begin{mathpar}
\inferH{valdep-persistent}
{\valueDependent(\type, r_1, r_2)}
{\always \valueDependent(\type, r_1, r_2)}
\and
\inferH{classification-agree}
{\classification{\lvlvar_1}{q_1}
\and \classification{\lvlvar_2}{q_2}
}
{\lvlvar_1 = \lvlvar_2}
\and
\inferH{classification-op}
{}
{\classification{\lvlvar}{q_1}
\ast \classification{\lvlvar}{q_2} \dashv\vdash \classification{\lvlvar}{q_1+q_2}}
\and
\inferH{classification-1-exclusive}
{\classification{\lvlvar}{1} \and\classification{\lvlvar}{q}}
{\FALSE}
\and
\inferH{classification-auth-agreee}
{\classificationAuth{\lvlvar_1} \and
\classification{\lvlvar_2}{q}}
{\lvlvar_1 = \lvlvar_2}
\and
\inferH{classification-update}
{\classificationAuth{\lvlvar} \and
\classification{\lvlvar}{1}}
{\pvs \classificationAuth{\lvlvar'} \ast \classification{\lvlvar'}{1}}
\and
\inferH{read-spec}
{\valueDependent(\type, r_1, r_2) \and
(\All \lvlvar\, \val_1\, \val_2.
  \classificationAuth{\lvlvar} \ast \interp{\type \sqcup \lvlvar}(\val_1, \val_2) \vsW \classificationAuth{\lvlvar} \ast \pred(\val_1, \val_2))
}
{\dwpre{\readRec\ r_1}{\readRec\ r_2}{\pred}}
\and
\inferH{write-spec}
{\valueDependent(\type, r_1, r_2) \and
(\All \lvlvar.
  \classificationAuth{\lvlvar} \vsW \classificationAuth{\lvlvar}
\ast \interp{\type \sqcup \lvlvar}(\val_1, \val_2)
 \ast \pred(\unittt, \unittt))}
{\dwpre{\storeRec\ r_1\ \val_1}{\storeRec\ r_2\ \val_2}{\pred}}
\and
\inferH{is-classified-spec}
{\valueDependent(\type, r_1, r_2) \and
(\All \lvlvar\, b.   \classificationAuth{\lvlvar} \vsW \classificationAuth{\lvlvar}
\ast ((b = \<false> \to \lvlvar = \low) \wand \pred(b, b)))}
{{\dwpre{\isClassified\ r_1}{\isClassified\ r_2}{\pred}}}
\and
\inferH{declassify-spec}
{\valueDependent(\type, r_1, r_2) \and
\classification{\lvlvar}{q} \and\\
(\classificationAuth{\lvlvar} \ast \classification{\lvlvar}{q}
\vsW \classificationAuth{\low} \ast \classification{\low}{q}
\ast (\classification{\low}{q} \wand \pred(\unittt, \unittt)))
}
{\dwpre{\declassify\ r_1\ \val_1}{\declassify\ r_2\ \val_2}{\pred}}
\and
\inferH{classify-spec}
{\valueDependent(\type, r_1, r_2) \and
\classification{\lvlvar}{q} \and\\
(\classificationAuth{\lvlvar} \ast \classification{\lvlvar}{q}
\vsW \classificationAuth{\high} \ast \pred(\unittt, \unittt)}
{\dwpre{\classify\ r_1}{\classify\ r_2}{\pred}}
\and
\inferH{new-vdep-spec}
{\interp{\type \sqcup \lvlvar}(\val_1, \val_2) \and
(\All r_1\, r_2. \valueDependent(\type, r_1, r_2) \ast
  \classification{\lvlvar}{1} \wand \pred(r_1, r_2))}
{\dwpre{\newRec\ \val_1}{\newRec\ \val_2}{\pred}}
\end{mathpar}
\caption{HOCAP-style specifications for dynamically classified references.}
\label{fig:dyn_class_ref_rules}
\end{figure*}

\subsection{Generalized rule for branching}
\label{sec:discussion:logrel_if}
The notion of security that we use (strong low-bisimulation) allows for branching on high-sensitivity data, provided that the timing behavior of the branches is indistinguishable.
However, if we branch on a high-sensitivity Boolean, it is insufficient to verify that each individual branch is secure, we also have to verify that the two different branches are indistinguishable for the attacker.
This kind of condition is present in the rule \ruleref{logrel-if} in \Cref{fig:type_system}.
We can speak of two different branches being indistinguishable because we have moved from a unary typing system to a binary logic.

Recall, that our inference rules are interpreted as an separating implication, where the premises are joined together by a separating conjunction.
To prove each premise, the user of the rule has to distribute the resources they currently have among the premises.
The last four premises in \ruleref{logrel-if}, however, are joined by a regular intuitionistic conjunction ($\wedge$).
The user still has to prove both of those premises if they wish to apply the rule, but this time they do not have to split their resources, \ie they are able to reuse the same resource to prove all the premises.
This corresponds to the fact that there are four possible combinations of branches, but only one of the combinations can actually occur.

\section{HOCAP-style modular specifications}
\label{sec:appendix:modular_specs}

We provide modular logically atomic specifications for the module of dynamically classified references (\Cref{sec:modular_value_dep_spec}) in \Cref{fig:dyn_class_ref_rules}.
These specifications are stronger than the ones given in \Cref{fig:dyn_class_ref_rules} in the sense that they are \emph{logically atomic}, \ie they allow one to open invariants around operations.
This is achieved using the HOCAP~\cite{HOCAP} approach to logical atomicity.
More information about HOCAP-style specifications in Iris can be found in~\cite[Chapter 10]{irisLN}.
Note that the weaker specifications in \Cref{fig:value_dep_derived_specs} (from \Cref{sec:modular_value_dep_spec})
can be derived from the HOCAP-style specifications in \Cref{fig:dyn_class_ref_rules}.

\section{Soundness}
\label{appendix:soundness}
\begin{figure*}[t!]
\begin{align*}
\dwpre{\expr_1}{\expr_2}{\pred} \eqdef{}&
  \begin{cases}
  \pvs[\top] \pred(\expr_1, \expr_2)
  & \hfill\textnormal{if $\expr_1,\expr_2 \in \Val$} \\
  \pvs[\top] \FALSE
  & \hspace{-3em}\hfill\textnormal{if $\expr_1 \in \Val$ xor $\expr_2 \in \Val$} \\
  \All \stateS_1\, \stateS_2. \staterel(\stateS_1, \stateS_2)
    \wand \pvs[\top][\emptyset]\ \red(\expr_1, \stateS_1) \ast \red(\expr_2, \stateS_2) \ast{}  \\
  \quad \All \expr'_1\, \stateS'_1\, \vec{\expr_1}\, \expr'_2\, \stateS'_2\, \vec{\expr_2}.
    (\expr_1, \stateS_1) \step  (\expr'_1\vec{\expr_1}, \stateS'_1) \wedge{} (\expr_2, \stateS_2) \step  (\expr'_2\vec{\expr_2}, \stateS'_2) \wand{}\\
  \quad \quad 
    \pvs[\emptyset][\emptyset] \later \pvs[\emptyset][\top] \staterel(\stateS'_1, \stateS'_2) \ast \dwpre{\expr'_1}{\expr'_2}{\pred} \ast
    \Sep_{(\expr''_1, \expr''_2) \in \vec{\expr_1} \times \vec{\expr_2}}. \dwpre{\expr''_1}{\expr''_2}{\TRUE}
  & \hfill \textnormal{otherwise}
  \end{cases}
\end{align*}
\caption{The model of double weakest preconditions.}
\label{fig:dwp_def}
\end{figure*}


\Cref{fig:dwp_def} contains the formal model of $\dwpre{\expr_1}{\expr_2}{\pred}$ as a definition in the Iris framework.
Recall that this definition captures that the expressions $\expr_1$ and $\expr_2$ are executed in a lock-step manner.
Since this definition is inspired by the definition of ordinary weakest preconditions in Iris and the \emph{product program} construction~\cite{ProductPrograms}, instead of Iris's \emph{state interpretation} $\stateinterp : \State \to \Prop$, we have a \emph{state relation} $\staterel : \State\times\State \to \Prop$ that keeps track of both the left and right-hand side heaps.

We now provide the details of the construction of a strong low-bisimulation out of a double weakest precondition proof.

\begin{definition}
\label{def:R_relation}
We define the relation $\RR$ on configurations of the same size to be the following:
\begin{align*}
(\expr_0 \expr_1 \dots \expr_m, \stateS_1)
  \RR (\exprB_0 \exprB_1 \dots \exprB_m, \stateS_2) \eqdef{}& \Exists n:\nat.\\
& \hspace{-15em}
\begin{array}{l}
\TRUE \proves
  \left(\!\pvs[\top][\emptyset] \later \pvs[\emptyset][\top]\!\right)^n \pvs[\top]
    \staterel(\stateS_1, \stateS_2) \ast I_{\Llocset} \ast{} \\[-0.2em]
  \qquad \dwpre{\expr_0}{\exprB_0}{\Ret \val_1\, \val_2. \val_1 = \val_2} \ast{} \\
  \qquad \Sep_{1 \leq i \leq m}.\ \dwpre{\expr_i}{\exprB_i}{\TRUE}
\end{array}
\end{align*}
%
\end{definition}

\balance

Note that $\RR$ is defined at the meta-level, \ie outside \thelogic; in particular the existential quantifier $\exists n:\nat$ is at the meta-level.
The relation $\RR$ relates two configurations if all the threads are related by a double weakest precondition, and execution of the main threads furthermore result in the same value.
The invariant $I_{\Llocset}$ (which has been defined in \Cref{sec:soundness_statement}) guarantees that the output locations $\Llocset$ always contain the same data between any executions of the two configurations.
The existentially quantified natural number $n$ bounds the number of times the definition of double weakest preconditions has been unfolded.
It is needed to show that $\RR$ is closed under reductions.

The relation $\RR$ allows one to ``lift'' double weakest precondition proofs from inside the logic:

\begin{proposition}
\label{prop:dwp_lift_bisim}
If $I_{\Llocset} \vdash \dwpre{\expr}{\exprB}{\Ret \val_1\,\val_2. \val_1 = \val_2}$
is derivable in \thelogic, then $(\expr, \stateS_1)\RR(\exprB, \stateS_2)$ for any \mbox{$\stateS_1 \lowEquiv \stateS_2$}.
\end{proposition}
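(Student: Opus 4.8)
The plan is to unfold \Cref{def:R_relation} for the two single-thread configurations $(\expr,\stateS_1)$ and $(\exprB,\stateS_2)$ — for which $m = 0$ — and to discharge the resulting meta-level entailment by choosing the witness $n = 0$. With this choice the iterated modality $(\pvs[\top][\emptyset]\later\pvs[\emptyset][\top])^n$ collapses to the identity, so it remains to establish
\[
  \TRUE \proves \pvs[\top]\big(\staterel(\stateS_1, \stateS_2) \ast I_{\Llocset} \ast \dwpre{\expr}{\exprB}{\Ret \val_1\, \val_2. \val_1 = \val_2}\big).
\]
All three conjuncts will be produced under the single fancy update $\pvs[\top]$.

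First I would invoke the initialization lemma for the state relation: from the two heaps it allocates the ghost resources underlying $\staterel$ and yields, under a fancy update, the state relation $\staterel(\stateS_1, \stateS_2)$ together with the points-to connectives $\Sep_{\loc \in \stateS_1} \loc \mapstoL \stateS_1(\loc)$ and $\Sep_{\loc \in \stateS_2} \loc \mapstoR \stateS_2(\loc)$. This is the relational analogue of the ghost-state allocation that accompanies Iris's ordinary state interpretation, now instantiated for the state relation of \Cref{fig:dwp_def}.

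Next I would build $I_{\Llocset}$. For each output location $\loc \in \Llocset$, the hypothesis $\stateS_1 \lowEquiv \stateS_2$ guarantees $\stateS_1(\loc) = \stateS_2(\loc) = i$ for some $i \in \mathbb{Z}$; hence the points-to connectives obtained above supply $\loc \mapstoL i \ast \loc \mapstoR i$, and therefore $\Exists i \in \mathbb{Z}.\, \loc \mapstoL i \ast \loc \mapstoR i$. Allocating an invariant with namespace $\namesp.(\loc,\loc)$ for each such $\loc$ (all allocations taking place under $\pvs[\top]$) and taking the separating conjunction yields exactly $I_{\Llocset}$. Since $I_{\Llocset}$ is a separating conjunction of persistent invariant assertions, it is duplicable, so I can both retain it for the final conjunct and feed a copy to the hypothesis. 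Applying $I_{\Llocset} \vdash \dwpre{\expr}{\exprB}{\Ret \val_1\, \val_2. \val_1 = \val_2}$ then discharges the last conjunct, and collecting the three pieces under $\pvs[\top]$ completes the proof.

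The main obstacle is the initialization step: one must establish, once and for all, that the ghost state backing $\staterel$ can be allocated so as to \emph{simultaneously} produce $\staterel(\stateS_1, \stateS_2)$ and full ownership of every location in both heaps, since precisely that ownership is what makes the subsequent invariant allocation from low-equivalence possible. This depends on the concrete definition of $\staterel$ and is the only genuinely non-routine ingredient; the remaining steps — invariant allocation from $\stateS_1 \lowEquiv \stateS_2$ and discharging the double weakest precondition via the hypothesis — are bookkeeping with the fancy update and the persistence of invariants.
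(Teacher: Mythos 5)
Your proposal is correct and follows the paper's own proof exactly: the paper likewise picks $n=0$ and uses $\stateS_1 \lowEquiv \stateS_2$ to establish $\staterel(\stateS_1,\stateS_2)$ and $I_{\Llocset}$ under the update, feeding the latter to the hypothesis. You merely spell out the details the paper's two-line sketch leaves implicit (ghost-state initialization yielding the points-to connectives, per-location invariant allocation, and duplicability of $I_{\Llocset}$), and you correctly identify the $\staterel$ initialization lemma as the one ingredient deferred to the mechanization.
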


\begin{proof}
For showing $(\expr, \stateS_1)\RR(\exprB, \stateS_2)$, pick $n=0$.
Because $\stateS_1$ and $\stateS_2$ agree on the $\Llocset$-locations (\ie $\stateS_1 \lowEquiv \stateS_2$), we can establish the state relation $\staterel(\stateS_1, \stateS_2)$ and the invariant $I_{\Llocset}$.
\end{proof}

\begin{lemma}
\label{lem:dwp_rel}
The following properties hold:
\begin{enumerate}
\item $\RR$ is symmetric;
\item If $(\val\vec{\expr}, \stateS_1) \RR (\valB\vec{\exprB}, \stateS_2)$, then $\val = \valB$;
\item If $(\vec{\expr}, \stateS_1) \RR (\vec{\exprB}, \stateS_2)$, then $\lvert \vec{\expr} \rvert = \lvert \vec{\exprB}\rvert$ and $\stateS_1 \lowEquiv \stateS_2$;
\item If $(\expr_0 \dots \expr_i \dots, \stateS_1) \RR (\exprB_0 \dots \exprB_i \dots, \stateS_2)$ and $(\expr_i, \stateS_1) \step (\expr'_i \vec{\expr}, \stateS'_1)$,
then there exist an $\exprB'_i$, $\vec{\exprB}$ and $\stateS'_2$ such that:
  \begin{itemize}
  \item $(\exprB_i, \stateS_2) \step (\exprB'_i \vec{\exprB}, \stateS'_2)$;
  \item $(\expr_0 \dots \expr'_i \vec{\expr}\dots, \stateS'_1) \RR
    (\exprB_0 \dots \exprB'_i \vec{\exprB} \dots, \stateS'_2)$.
  \end{itemize}
\end{enumerate}
\end{lemma}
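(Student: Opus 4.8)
The plan is to unfold \Cref{def:R_relation} in each case and reduce every claim to something dischargeable by Iris's soundness theorem for pure (meta-level) propositions, which strips a prefix of the form $(\pvs[\top][\emptyset]\later\pvs[\emptyset][\top])^n\pvs[\top]$ from an entailment out of $\TRUE$. Properties~(2) and~(3) are read off this way, property~(1) needs a separate symmetry argument about the model of $\dwp$, and property~(4) — the lock-step simulation clause — is the crux where essentially all the work lies.

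For property~(2), both main threads are the values $\val$ and $\valB$, so by the first case of the model in \Cref{fig:dwp_def} (equivalently \ruleref{dwp-val}) the conjunct $\dwpre{\val}{\valB}{\Ret v_1\,v_2.\,v_1 = v_2}$ collapses to $\pvs[\top](\val = \valB)$. Discarding the remaining separating conjuncts (Iris is affine), the witnessing entailment becomes $\TRUE \proves (\pvs[\top][\emptyset]\later\pvs[\emptyset][\top])^n\pvs[\top]\,(\val=\valB)$, and since $\val = \valB$ is pure it follows by soundness. For property~(3) the length equality $\lvert\vec{\expr}\rvert = \lvert\vec{\exprB}\rvert$ is immediate, as $\RR$ is defined only on configurations of the same size (the thread indices $0,\dots,m$ match on both sides). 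For $\stateS_1 \lowEquiv \stateS_2$ I use the resources $\staterel(\stateS_1,\stateS_2) \ast I_{\Llocset}$ that the witness provides: opening each invariant in $I_{\Llocset}$ (\Cref{sec:soundness_statement}) yields, for every $\loc\in\Llocset$, assertions $\loc\mapstoL i$ and $\loc\mapstoR i$ for a common integer $i$, and combining these with the authoritative heaps owned by $\staterel(\stateS_1,\stateS_2)$ forces $\stateS_1(\loc) = i = \stateS_2(\loc)$. This agreement is pure (obtained under a $\later$ that the step budget absorbs), so $\stateS_1\lowEquiv\stateS_2$ again follows by soundness.

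For property~(1) the obstacle is that $\dwp$ is not literally symmetric: its model distinguishes the left heap (tracked by $\mapstoL$) from the right heap (tracked by $\mapstoR$). I would first establish a \textbf{swap lemma}, proved by Löb induction over the guarded recursive model in \Cref{fig:dwp_def}, stating that exchanging the two ghost heaps turns $\dwpre{\expr_1}{\expr_2}{\Ret v_1\,v_2.\,\pred(v_1,v_2)}$ into $\dwpre{\expr_2}{\expr_1}{\Ret v_1\,v_2.\,\pred(v_2,v_1)}$ and $\staterel(\stateS_1,\stateS_2)$ into $\staterel(\stateS_2,\stateS_1)$. Symmetry of $\RR$ then follows by reusing the same $n$: the postconditions $v_1 = v_2$ and $\TRUE$ are symmetric, $I_{\Llocset}$ is symmetric by construction, and the swap lemma handles every $\dwp$ conjunct together with $\staterel$.

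Property~(4) is the main obstacle. Assume the relation holds via some $n$ and that $(\expr_i,\stateS_1)\step(\expr'_i\vec{\expr},\stateS'_1)$. First I rule out degenerate cases: $\expr_i$ is reducible, hence not a value, so if $\exprB_i$ were a value the second (``xor'') case of \Cref{fig:dwp_def} would make the $i$-th conjunct $\pvs[\top]\FALSE$ and the whole witness would entail $\FALSE$ from $\TRUE$, contradicting that the relation holds; thus $\exprB_i$ is also a non-value and the $i$-th $\dwp$ unfolds to its third case. Supplying the $\staterel(\stateS_1,\stateS_2)$ from the witness yields reducibility of both sides, in particular $\red(\exprB_i,\stateS_2)$, and since the thread-local semantics is \emph{deterministic} (\Cref{sec:lang}) there is a \emph{unique} reduction $(\exprB_i,\stateS_2)\step(\exprB'_i\vec{\exprB},\stateS'_2)$, which I take as the required witnesses. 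Instantiating the model's quantifier over reductions with the given left step and this right step produces $\staterel(\stateS'_1,\stateS'_2)$, a continuation $\dwpre{\expr'_i}{\exprB'_i}{\dots}$, and pairwise $\dwp$s for the forked threads indexed by $\vec{\expr}\times\vec{\exprB}$ (these lists have length $0$ or $1$ and match, since $\<fork>$ spawns exactly one thread). It remains to reassemble these into the $\RR$-witness for the stepped configuration: the persistent $I_{\Llocset}$ is retained (it is duplicable, being an $\always$-proposition), untouched threads keep their $\dwp$s, the continuation replaces the $i$-th conjunct, the forked-thread $\dwp$s are appended with postcondition $\TRUE$, and the main-thread equality postcondition is preserved whether $i=0$ or $i\ge 1$. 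The delicate point is the modality accounting: unfolding the $i$-th $\dwp$ and supplying $\staterel$ contributes one $\pvs[\top][\emptyset]$, while the continuation after the step is guarded by $\pvs[\emptyset][\emptyset]\later\pvs[\emptyset][\top]$, and these compose to exactly one extra $(\pvs[\top][\emptyset]\later\pvs[\emptyset][\top])$, so the construction lives under the prefix for $n+1$. Matching this bookkeeping, and the product structure of the forked threads, against \Cref{def:R_relation} is the only real work; everything else is routine.
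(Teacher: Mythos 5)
Your proposal is correct and takes essentially the approach the paper intends: the paper defers this proof to the Coq mechanization, but the $n$-indexed prefix $\left(\pvs[\top][\emptyset]\later\pvs[\emptyset][\top]\right)^n$ in \Cref{def:R_relation} --- explicitly motivated in the text as ``needed to show that $\RR$ is closed under reductions'' --- exists precisely for your $n \mapsto n{+}1$ modality bookkeeping in property~(4), properties~(2) and~(3) are discharged by step-update soundness for pure propositions exactly as you describe, and symmetry via a L\"ob-induction swap of the two heap roles (together with the symmetric content of $I_{\Llocset}$ and the flipped postcondition) is the intended reading of the model. Two harmless nitpicks: determinism of the thread-local semantics is not actually needed in~(4), since reducibility already supplies a right-hand step and the model quantifies over all step pairs; and the equal length of the forked-thread lists follows from reading the big-op over $\vec{\expr_1} \times \vec{\expr_2}$ in \Cref{fig:dwp_def} as the pairwise zip that the mechanization uses (otherwise the lemma would fail when only one side forks), not from the fact that $\<fork>$ spawns exactly one thread.
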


By the above lemma, we now know that $\RR$ has all the properties of a strong low-bisimulation (\cf \cite[Definition 6]{prob-ni}), short of being a partial equivalence relation.
Since $\RR$ is not transitive, we consider its transitive closure $\RRR$, and verify that all the properties of a strong low-bisimulation hold for $\RRR$.

\begin{theorem}
\label{thm:strong-low-bisim}
The relation $\RRR$ is \emph{a strong low-bisimulation} on configurations.
\end{theorem}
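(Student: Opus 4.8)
The plan is to derive each defining property of a strong low-bisimulation (\defref{def:strong-low-bisim}) for $\RRR$ from the corresponding single-step property of $\RR$ in \lemref{lem:dwp_rel}, inducting on the length of the chain witnessing $\RRR$. Since $\RRR$ is by construction the transitive closure of $\RR$, transitivity is immediate. Symmetry follows from \lemref{lem:dwp_rel}(1): given a chain $c_0 \RR c_1 \RR \dots \RR c_n$, reversing it and applying symmetry of $\RR$ at each link yields $c_n \RR \dots \RR c_0$, so $\RRR$ is symmetric and hence a partial equivalence relation. It then remains to lift the three numbered conditions through chains.

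For the value-agreement condition, the key observation is that $\RR$ never relates a value-headed configuration to a non-value-headed one: by \Cref{def:R_relation} such a relation would require $\dwpre{\expr_0}{\exprB_0}{\Ret \val_1\,\val_2. \val_1 = \val_2}$ with exactly one of $\expr_0, \exprB_0$ a value, and by the model in \Cref{fig:dwp_def} this double weakest precondition collapses to $\pvs[\top]\FALSE$, contradicting the entailment from $\TRUE$. Consequently, in any chain whose two endpoints have a value as their main thread, every intermediate configuration is also value-headed; applying \lemref{lem:dwp_rel}(2) at each link gives equality of successive head values, and transitivity of equality yields agreement of the endpoints. The length-and-low-equivalence condition lifts even more directly, since both $\lvert \vec\expr \rvert = \lvert \vec\exprB \rvert$ and $\stateS_1 \lowEquiv \stateS_2$ are transitive, so \lemref{lem:dwp_rel}(3) applied link-by-link propagates them from one end of the chain to the other.

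The crux is the step-simulation condition, which I would establish by threading a single step through the chain. Suppose $(\expr_0 \dots \expr_i \dots, \stateS_1) \RRR (\exprB_0 \dots \exprB_i \dots, \stateS_2)$ via $c_0 \RR \dots \RR c_n$ and thread $i$ on the left takes a step. I apply \lemref{lem:dwp_rel}(4) to the first link to obtain a matching step of thread $i$ in $c_1$ together with the $\RR$-relation between the resulting configurations; I then feed that derived step of $c_1$ into \lemref{lem:dwp_rel}(4) for the second link, and so on, until I reach $c_n$. The step obtained for $c_n$ is the required matching right-hand step, and the chain of post-step $\RR$-relations composes to an $\RRR$-relation. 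Because the thread-local reduction $\step$ is deterministic (via the allocation oracle of \Cref{sec:lang}) the intermediate steps are unambiguous, but in any case the bare existence supplied by \lemref{lem:dwp_rel}(4) at each link is all the induction requires. I expect the main obstacle to be exactly this propagation---checking that the step existentially produced on each intermediate configuration is precisely the one driving the next application---together with the value-headedness preservation argument from the model of $\dwp$, rather than any genuinely hard calculation.
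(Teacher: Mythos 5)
Your proposal is correct and follows the same route as the paper: the paper likewise establishes the single-step properties of $\RR$ (\lemref{lem:dwp_rel}), takes the transitive closure $\RRR$, and lifts each condition of \defref{def:strong-low-bisim} through chains of $\RR$-links exactly as you describe. Your handling of the one point the paper leaves implicit---that value-headedness of the main thread propagates along a chain, via the value/non-value clause of the model in \Cref{fig:dwp_def} together with consistency of the logic (no derivation of $\FALSE$ under the modality tower)---is the right argument, so there is no gap.
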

%
The theorem \Cref{thm:strong-low-bisim} in combination with \Cref{prop:dwp_lift_bisim} implies the soundness of \thelogic (\Cref{thm:adequacy}).


\end{document}